\newcommand{\CC}{\mathbb{C}}
\newcommand{\RR}{\mathbb{R}}
\newcommand{\lm}{\lambda}
\newcommand{\SSS}{\mathcal{S}}
\newcommand{\Gm}{\Gamma}
\newcommand{\rr}{\rho}
\newcommand{\ra}{\rightarrow}
\newcommand{\e}{\epsilon}
\newcommand{\PP}{\mathbb{P}}
\newcommand{\ben}{\begin{eqnarray}}
\newcommand{\een}{\end{eqnarray}}
\newtheorem{lemma}{Lemma}
\newtheorem{theorem}{Theorem}
\title{On Block Coherence of Frames}
\author{%
  \authorblockN{Robert Calderbank\authorrefmark{1}, Andrew Thompson\authorrefmark{1}, Yao Xie\authorrefmark{2}
  }\\
  \authorblockA{\normalsize
      \authorrefmark{1}Department of Mathematics, Duke University, Durham, NC.\\
\authorrefmark{2}H. Milton Stewart School of Industrial and Systems Engineering, Georgia Institute of Technology, Atlanta, GA.
  }

  Email: robert.calderbank@duke.edu, thompson@math.duke.edu, yao.xie@isye.gatech.edu.
}
\begin{document}
\maketitle

\begin{abstract}

Block coherence of matrices plays an important role in analyzing the performance of block compressed sensing recovery algorithms (Bajwa and Mixon, 2012). In this paper, we characterize two block coherence metrics: worst-case and average block coherence. First, we present lower bounds on worst-case block coherence, in both the general case and also when the matrix is constrained to be a union of orthobases. We then present deterministic matrix constructions based upon Kronecker products which obtain these lower bounds. We also characterize the worst-case block coherence of random subspaces. Finally, we present a flipping algorithm that can improve the average block coherence of a matrix, while maintaining the worst-case block coherence of the original matrix. We provide numerical examples which demonstrate that our proposed deterministic matrix construction performs well in block compressed sensing.
	
\end{abstract}

\section{Introduction}\label{intro}
In compressed sensing~\cite{donoho,candes1}, a new paradigm in signal processing, we are interested in recovering a sparse signal $x$ from a reduced number of measurements $y = Ax$. Rather than sparsity, where $x$ only has a few nonzero entries, a more general model is to consider {\em block sparsity}, where the signal has a block structure in that the entries in each block are nonzero simultaneously, and there are only a few blocks with nonzero entries. Recovery of block sparse signal arises from a wide range of applications (as described in Section \ref{sec:applications}).

Various algorithms have been proposed to recover block sparse signals, including \textit{group LASSO}~\cite{eldar} and the lower complexity \textit{one-step group thresholding} \cite{one_step}, see Section \ref{sec:applications} for a brief survey. It has been shown that the performance of one-step group thresholding depends crucially on two block coherence properties of the sensing matrix $A$: {\em worst-case block coherence} and {\em average block coherence}, which we next define. Suppose the matrix $A\in\CC^{n\times mr}$ consists of the concatenation of $m$ equally sized blocks, namely
\begin{align*}
&A = \begin{bmatrix} A_1 & A_2 & \cdots & A_m \end{bmatrix},\\
&A_i\in\CC^{n\times r}, \quad i \in\{1,2,\ldots,m\},
\end{align*}
where we assume that the columns of $A$ are unit-norm, that the columns within each block are orthonormal, and also that $mr>n$ and $r<n$. Thus, provided the columns of $A$ span $\CC^n$, $A$ is  a unit-norm frame in $\CC^n$. At times, we will also consider the special case in which $A$ is real. Following \cite{eldar,conditioning}, we define $\mu(A)$, the worst-case block coherence of $A$, to be
\begin{equation}\label{worstcase_def}
\mu(A):=\max_{i\neq j}\|A_i^{\ast}A_j\|_2,
\end{equation}
and following \cite{one_step}, we define $\nu(A)$, the average block coherence of $A$, to be
\begin{equation}\label{average_def}
\nu(A):=\frac{1}{m-1}\max_{i}\Big\|\sum_{j\neq i}A_i^{\ast}A_j\Big\|_2,
\end{equation}
where $X^*$ denotes the conjugate transpose of a matrix $X$, and $\|X\|_2$ denotes the spectral norm of $X$, which is also equal to the maximum absolute singular value of $X$. Note that $\nu(A)$ is referred to as average \textit{group} coherence in~\cite{one_step}.

Intuitively, a matrix $A$ with good coherence properties should have subspaces formed by submatrices of $A$ as orthogonal to each other as possible. Hence, the problem of designing a matrix with good coherence properties is related to the problem of packing subspaces, also known as {\em Grassmann packing}~\cite{packing,alternating}. Let $G(n,r)$ be the Grassmann manifold of $r$-dimensional subspaces in $\CC^{n}$. Now suppose $\{A_1,A_2,\ldots,A_m\}$ are orthonormal bases for the subspaces $\{\SSS_1,\SSS_2,\ldots,\SSS_m\}$, so that $\SSS_i\in G(n,r)$ for each $i=1,2,\ldots,m$. A Grassmann packing of $m$ subspaces in $G(n,r)$ with respect to some distance metric is said to be optimal if it maximizes the minimum distance between the subspaces. Various distance metrics have been considered, including the chordal distance, spectral distance and geodesic distance~\cite{packing,alternating}.

In a seminal paper on Grassmann packings, Conway et al.~\cite{packing} proved an upper bound on the chordal distance given $(m,n,r)$, from which an upper bound on the spectral distance was deduced in~\cite{alternating}. Meanwhile, a fundamental lower bound on worst-case block coherence was essentially (though not explicitly) determined in~\cite{lemmens_seidel}. It can be shown that optimal Grassmann packings with respect to the spectral distance yield matrices with minimum worst-case block coherence. In fact, it is straightforward to deduce the lower bound on worst-case block coherence in~\cite{lemmens_seidel} from the upper bound on spectral distance in~\cite{alternating} (see Section~\ref{lower_bound} for further elaboration).

Concerning achievability of the above-mentioned bounds, most explicit constructions currently existing in the literature are for the case of $r=1$ (worst-case and average column coherence respectively~\cite{gabor}). In this case, the lower bound on worst-case coherence, known as the Welch bound~\cite{welch}, is obtained if and only if the matrix $A$ is an equiangular tight frame (ETF)~\cite{strohmer}, see Section~\ref{constructions}. Several infinite families of ETFs have been found~\cite{steiner,existence,complex,kirkman}, as well as other infinite families which nearly meet the Welch bound~\cite{revisiting}. For the case $r\geq 2$ which is our interest in the present paper, various optimal Grassmann packings (thereby yielding matrices with optimal worst-case block coherence) were constructed for small dimensions in~\cite{packing}. Infinite families of nearly-optimal Grassmann packings have also been found~\cite{shor_sloane,group_theoretic}, both of which have an underlying group-theoretic structure. Numerical methods for empirically constructing Grassmann packings have also been proposed, based upon nonlinear optimization~\cite{trosset} and alternating projection~\cite{alternating}.

It was shown in \cite{lemmens_seidel} that optimal Grassmann packings can also be constructed as the Kronecker product of an ETF with a unitary matrix, from which it immediately follows that every infinite family of ETFs give rise to infinite families of Grassmann packings. In this paper, we will extend this technique of employing Kronecker products.

The contributions of this paper are as follows
\begin{enumerate}
\item We establish a connection between worst-case coherence and optimal Grassmann packings with respect to both the chordal distance and spectral distance. Using this connection, we review and unite existing results concerning optimal Grassmann packings~\cite{packing,alternating} and a fundamental lower bound on worst-case block coherence~\cite{lemmens_seidel}. We then review an optimal construction, originally given in~\cite{lemmens_seidel}: these matrices are constructed as the Kronecker product of an equiangular tight frame with a unitary matrix. The result implies the existence of several families of matrices with optimal worst-case block coherence, and furthermore the existence of several infinite families of optimal Grassmann packings with respect to both the chordal and spectral distances. We also translate bounds on the maximum number of subspaces in optimal Grassmann packings into the world of frame coherence, which gives upper bounds on the number of blocks in the matrix $A$.
\item We also prove a tighter lower bound on the worst-case block coherence of matrices which are constrained to be unions of orthonormal bases. By extending the Kronecker product approach, we explicitly construct matrices which nearly meet the fundamental lower bound which allow for an increased number of blocks. This result implies the existence of several infinite families of matrices with nearly optimal worst-case block coherence, which in turn gives infinite families of near optimal Grassmann packings. We also obtain upper bounds on the number of blocks in the matrix $A$ for optimal packings in this relaxed sense.
\item We analyze the worst-case block coherence of matrices formed from random orthonormal bases. In a proportional-dimensional asymptotic framework, we prove an upper bound on worst-case block coherence which is a small multiple above optimal. Our result can also be used to quantify the asymptotic spectral distance of Grassmann packings formed from the union of random subspaces.
\item We present a flipping algorithm for subspaces, inspired by a similar flipping algorithm for columns \cite{fundamental}, which can reduce the average block coherence of a frame while maintain the worst-case block coherence. We also prove that, starting with a frame with low worst-case block coherence, the flipping algorithm can produce a frame with average block coherence that meets the requirement of the one-step thresholding algorithm \cite{one_step}.
\end{enumerate}

An outline of the paper is as follows: We first extend our introduction in Section~\ref{sec:applications} by exploring some of the applications of block sparsity in compressed sensing, and of optimal Grassmann packings. We focus on worst-case block coherence in Section~\ref{worst_case}: we review the exisiting results on fundamental bounds and optimal constructions in Section \ref{lower_bound}, before presenting our new bounds and constructions for unions of orthonormal bases in Section~\ref{deterministic}, and analyzing random designs in Section~\ref{random}. We then turn to average block coherence in Section~\ref{average}: we prove results for our deterministic constructions in Section~\ref{theoretical}, and we present our flipping algorithm for reducing average block coherence in Section~\ref{flipping_section}. Numerical illustrations are provided in Section~\ref{numerics}.

\subsection{Applications}\label{sec:applications}

There exist several applications of block sparsity in compressed sensing. A far from exhaustive list includes the blind sensing of multi-band signals~\cite{eldar}, DNA microarrays~\cite{DNA}, and also in medical imaging including ECG imaging~\cite{ECG} and source localization in EEG/MEG brain imaging~\cite{EEG}. The same problem has also received much attention in statistical regression, where one predictor may often imply the presence of several others~\cite{one_step}. Another special case is the multiple measurement vector (MMV) model, also referred to as joint or simultaneous sparsity, where the matrix is a suitable rearrangement of the rows of a matrix of the form $I\otimes A$, and which has found applications in sensor networks and MIMO channels~\cite{MIMO2}.

We recall that block coherence is closely related to optimal Grassmann packing, which has found applications in the design of codes for multiple-antenna communication systems, such as MIMO. In this application, complex Grassmann packings consisting of a small number of low dimensional subspaces in a much larger ambient dimension are required, see~\cite{alternating} for references therein.

\subsection{Algorithms for block compressed sensing: further background}

Block coherence is crucial for the analysis of the one-step group thresholding algorithm for block sparse compressed sensing. To give additional context, several algorithms have been proposed for this problem, including the group LASSO convex relaxation (also known as $\ell_2/\ell_1$ mixed optimization)~\cite{eldar}, the greedy block OMP algorithm~\cite{eldar} and iterative thresholding~\cite{AMP}. It is celebrated that compressed sensing allows the recovery of a signal whose sparsity is proportional to its dimension, and several optimal-order guarantees of this form have been obtained in the case of random matrices for algorithms for block sparsity by means of the block-based Restricted Isometry Property and the analysis of message passing algorithms~\cite{eldar,AMP}. The only current optimal-order recovery guarantee for deterministic matrices is for one-step group thresholding~\cite{one_step}, and relies crucially upon both the worst-case and average block coherence. Another important consideration is that one-step group thresholding is an extremely simple algorithm and has the lowest computational complexity among all the algorithms mentioned.

\section{Worst-case block coherence}\label{worst_case}

\subsection{Fundamental bounds and constructions for worst-case block coherence}\label{lower_bound}

In this section, we present a review of existing results concerning optimal Grassmann packings~\cite{packing,alternating} and optimal worst-case block coherence~\cite{lemmens_seidel}, emphasizing the connection between the two. We also present optimal constructions which generalize the one given in~\cite{lemmens_seidel}.

\vspace{0.2cm}

\subsubsection{Bounds}

We begin by making the observation that $\mu(A)=1$ whenever $n<2r$, since in this case the two subspaces corresponding to any pair of blocks have a non-trivial intersection. We will therefore assume throughout Section~\ref{worst_case} that $n\geq 2r$.

Two choices of distance metric for Grassmann packings which have been previously studied~\cite{packing,alternating} are the chordal distance $d_C(\SSS_i,\SSS_j)$ and the spectral distance $d_S(\SSS_i,\SSS_j)$, defined as follows~\cite{alternating}.
\begin{equation}\label{chordal_def}
d_C(\SSS_i,\SSS_j):=\sqrt{r-\|A_i^{\ast}A_j\|_F^2};
\end{equation}
\begin{equation}\label{spectral_def}
d_S(\SSS_i,\SSS_j):=\sqrt{1-\|A_i^{\ast}A_j\|_2^2}.
\end{equation}
where  $\|X\|_F$ denotes the Frobenius norm of $X$, defined as $\|X\|_F = \sum_{i}\sum_j |X_{ij}|^2$.

Both metrics may also be expressed in terms of the principal angles between $\SSS_i$ and $\SSS_j$~\cite{alternating}: the spectral norm is the sine of the smallest principal angle and the square of the chordal distance is the sum of the squared sines of all the principal angles.

The following upper bound on the minimum chordal distance was obtained in a seminal paper by Conway et al.~\cite{packing}. It was named the Rankin bound since it was obtained by projecting Grassmannian space onto a sphere and applying the Rankin bound for spherical codes.

\begin{lemma}[Rankin bound for chordal distance{~\cite[Corollary 5.2]{packing}}]\label{chordal_lemma}
$$\min_{i\neq j} [d_C(\SSS_i,\SSS_j)]^2\le\frac{r(n-r)}{n}\cdot\frac{m}{m-1}.$$
If the bound is met, all subspaces are equidistant.
\end{lemma}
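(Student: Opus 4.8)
The plan is to prove the Rankin bound for chordal distance by embedding the Grassmann manifold $G(n,r)$ into a Euclidean sphere and then invoking the classical Rankin bound for packings on a sphere. The key observation, due to Conway et al., is that each subspace $\SSS_i$ with orthonormal basis $A_i$ can be represented by its orthogonal projection matrix $P_i := A_i A_i^{\ast} \in \CC^{n \times n}$. These projection matrices are Hermitian, satisfy $\trace(P_i) = r$, and $\trace(P_i^2) = r$, so they live on a sphere in the space of Hermitian matrices (equipped with the trace inner product $\langle X, Y \rangle = \trace(X^{\ast} Y)$). After recentering by subtracting $\tfrac{r}{n} I$ (the ``average'' projection) to make the embedding lie on a sphere through the origin, the images lie on a sphere of radius $\sqrt{r - r^2/n} = \sqrt{r(n-r)/n}$.

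**First I would** compute the squared Euclidean distance between two such embedded points in terms of the chordal distance. A direct calculation gives $\|P_i - P_j\|_F^2 = \trace(P_i^2) + \trace(P_j^2) - 2\trace(P_i P_j) = 2r - 2\|A_i^{\ast} A_j\|_F^2 = 2\,[d_C(\SSS_i,\SSS_j)]^2$, using the identity $\trace(P_i P_j) = \trace(A_i A_i^{\ast} A_j A_j^{\ast}) = \|A_i^{\ast} A_j\|_F^2$. Subtracting the common shift $\tfrac{r}{n}I$ does not change pairwise distances, so the $m$ shifted projection matrices form a spherical code of $m$ points on a sphere of squared radius $r(n-r)/n$, with minimum squared Euclidean distance equal to $2 \min_{i\neq j}[d_C(\SSS_i,\SSS_j)]^2$.

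**Then I would** apply the Rankin bound for spherical codes: any set of $m$ points on a sphere of radius $\rho$ centered at the origin has minimum squared distance at most $\tfrac{2m}{m-1}\rho^2$, with equality if and only if the points form a simplex (i.e., all pairwise distances equal and the centroid is the origin). This is proved by the standard trick of expanding $0 \le \|\sum_i v_i\|^2 = \sum_i \|v_i\|^2 + \sum_{i \neq j} \langle v_i, v_j \rangle = m\rho^2 + \sum_{i\neq j}\langle v_i, v_j\rangle$, then using $\langle v_i, v_j\rangle = \rho^2 - \tfrac{1}{2}\|v_i - v_j\|^2 \le \rho^2 - \tfrac{1}{2}d_{\min}^2$ for each of the $m(m-1)$ off-diagonal terms. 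Rearranging yields $d_{\min}^2 \le \tfrac{2m}{m-1}\rho^2$. Substituting $\rho^2 = r(n-r)/n$ and $d_{\min}^2 = 2\min_{i\neq j}[d_C(\SSS_i,\SSS_j)]^2$ gives exactly $\min_{i\neq j}[d_C(\SSS_i,\SSS_j)]^2 \le \tfrac{r(n-r)}{n}\cdot\tfrac{m}{m-1}$. Equality in the Rankin bound forces $\langle v_i, v_j\rangle$ to be constant over all $i \neq j$, hence all pairwise distances equal, which is precisely the claim that all subspaces are equidistant.

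**The main obstacle** I anticipate is not conceptual but bookkeeping: one must be careful that the embedding into Hermitian matrices is the correct one and that the recentering places everything on a sphere through the origin (so that the sign-definiteness argument $\|\sum_i v_i\|^2 \ge 0$ can be used). One should also verify that the Rankin bound applies in the real inner product space underlying the Hermitian matrices (of real dimension $n^2$), not just in $\RR^n$, but this is immediate since Rankin's argument only uses the inner-product structure. Since the statement is quoted as a known lemma from~\cite{packing}, a brief proof sketch along these lines, or simply a citation, suffices.
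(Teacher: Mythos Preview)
Your proposal is correct and follows exactly the approach the paper attributes to Conway et al.: the paper does not supply its own proof of this lemma but simply cites it, noting that it ``was obtained by projecting Grassmannian space onto a sphere and applying the Rankin bound for spherical codes,'' which is precisely the embedding-via-projection-matrices argument you outline. Your computations of the sphere radius, the distance identity $\|P_i-P_j\|_F^2=2[d_C(\SSS_i,\SSS_j)]^2$, and the simplex equality condition are all correct, so nothing further is needed beyond the citation.
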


The subspaces $\{\SSS_1,\SSS_2,\ldots,\SSS_m\}$ are said to be \textit{equi-isoclinic} if the singular values among all $A_i^{\ast}A_j$ for $i\neq j$ are equal~\cite{lemmens_seidel}, which is also equivalent to saying that the principal angles are all equal for each pair of subspaces. Dhillon et al.~\cite{alternating} deduced from Lemma~\ref{chordal_lemma} an upper bound on the minimum spectral distance.

\begin{lemma}[Spectral distance bound~{\cite[Corollary 4.2]{alternating}}]\label{spectral_lemma}
\begin{equation}\label{spectral_bound}
\min_{i\neq j} [d_S(\SSS_i,\SSS_j)]^2\le\frac{n-r}{n}\cdot\frac{m}{m-1}.
\end{equation}
If the bound is met, the subspaces are equi-isoclinic.
\end{lemma}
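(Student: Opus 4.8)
The plan is to derive the spectral distance bound directly from the Rankin bound for chordal distance (Lemma~\ref{chordal_lemma}) by exploiting the fact, already noted above, that both distances are read off from the principal angles between the two subspaces. First I would fix a pair $\SSS_i,\SSS_j$ and let $\theta_1\le\theta_2\le\cdots\le\theta_r$ denote their principal angles, so that the singular values of $A_i^{\ast}A_j$ are $\cos\theta_1,\ldots,\cos\theta_r$. Then $\|A_i^{\ast}A_j\|_F^2=\sum_{k}\cos^2\theta_k$ and $\|A_i^{\ast}A_j\|_2=\cos\theta_1$, and substituting into (\ref{chordal_def}) and (\ref{spectral_def}) gives $[d_C(\SSS_i,\SSS_j)]^2=\sum_{k}\sin^2\theta_k$ and $[d_S(\SSS_i,\SSS_j)]^2=\sin^2\theta_1$.

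The key step is then immediate: since $\theta_1$ is the smallest principal angle and $\sin$ is increasing on $[0,\pi/2]$, we have $\sin^2\theta_1=\min_k\sin^2\theta_k$, which is at most the average of the $\sin^2\theta_k$, so that $[d_S(\SSS_i,\SSS_j)]^2\le\tfrac1r[d_C(\SSS_i,\SSS_j)]^2$ for every pair $i\neq j$. Taking the minimum over all pairs on both sides and invoking Lemma~\ref{chordal_lemma},
$$\min_{i\neq j}[d_S(\SSS_i,\SSS_j)]^2\le\frac1r\min_{i\neq j}[d_C(\SSS_i,\SSS_j)]^2\le\frac1r\cdot\frac{r(n-r)}{n}\cdot\frac{m}{m-1}=\frac{n-r}{n}\cdot\frac{m}{m-1},$$
which is exactly (\ref{spectral_bound}).

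For the equality clause I would trace back through this chain. If the bound is met, every inequality above must be tight; in particular $\min_{i\neq j}[d_C(\SSS_i,\SSS_j)]^2=\tfrac{r(n-r)}{n}\cdot\tfrac{m}{m-1}$, so the Rankin bound of Lemma~\ref{chordal_lemma} is attained and all subspaces are chordal-equidistant. Consequently \emph{every} pair attains $\min_{i\neq j}[d_C]^2$, and for each such pair the per-pair inequality $[d_S]^2\le\tfrac1r[d_C]^2$ must also be an equality, which forces $\sin^2\theta_1=\cdots=\sin^2\theta_r$, i.e.\ all $r$ principal angles of that pair coincide. Hence every pair of subspaces is isoclinic with the same common angle, so all the matrices $A_i^{\ast}A_j$ share the same singular values — precisely the equi-isoclinic condition.

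I expect the only subtlety — more a bookkeeping point than an obstacle — to be the equality case: one must observe that the per-pair equality has to hold for every pair, not merely for a single minimizing pair, and this is what the chordal-equidistance forced by the Rankin bound supplies. The rest is a routine consequence of the principal-angle formulas for $d_C$ and $d_S$ together with the elementary "minimum $\le$ average" estimate.
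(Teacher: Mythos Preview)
Your argument is correct and matches the approach the paper attributes to the cited source: the paper does not give its own proof of Lemma~\ref{spectral_lemma} but states that Dhillon et al.\ ``deduced from Lemma~\ref{chordal_lemma}'' the spectral bound, and your derivation does exactly this via the per-pair inequality $[d_S]^2\le\tfrac1r[d_C]^2$ (minimum $\le$ average of the $\sin^2\theta_k$). Your handling of the equality case is also sound: once chordal-equidistance forces every pair to have $[d_C]^2=C$, the per-pair inequality gives $[d_S(i,j)]^2\le C/r$ for every pair, while the overall minimum equals $C/r$, so no pair can fall strictly below and hence every pair is isoclinic with the common angle --- precisely the equi-isoclinic conclusion.
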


A lower bound on worst-case block coherence may be derived from Lemma~\ref{spectral_lemma}.

\begin{theorem}[Universal lower bound on worst-case coherence~{\cite[Theorem 3.6]{lemmens_seidel}}]\label{block_welch}
\begin{equation}\label{welch_block}
\mu(A)\geq\sqrt{\frac{mr-n}{n(m-1)}}.
\end{equation}
If the bound is met, all singular values among all matrices $A_i^{\ast}A_j$ for $i\neq j$ are equal in modulus.
\end{theorem}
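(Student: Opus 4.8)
The plan is to obtain~\eqref{welch_block} as a direct algebraic consequence of the spectral distance bound in Lemma~\ref{spectral_lemma}, exploiting the fact that worst-case block coherence and minimum spectral distance are complementary quantities.

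First I would record the elementary identity relating the two. Because each $A_i$ has orthonormal columns, $\|A_i^{\ast}A_j\|_2 \le \|A_i^{\ast}\|_2\|A_j\|_2 = 1$, so the spectral distances in~\eqref{spectral_def} are genuine nonnegative reals; moreover, minimizing $[d_S(\SSS_i,\SSS_j)]^2 = 1 - \|A_i^{\ast}A_j\|_2^2$ over $i\neq j$ is the same as maximizing $\|A_i^{\ast}A_j\|_2^2$, so that $\min_{i\neq j}[d_S(\SSS_i,\SSS_j)]^2 = 1 - \mu(A)^2$ by the definition~\eqref{worstcase_def}.

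Substituting this into~\eqref{spectral_bound} yields $1 - \mu(A)^2 \le \frac{m(n-r)}{n(m-1)}$, and rearranging gives $\mu(A)^2 \ge \frac{n(m-1) - m(n-r)}{n(m-1)} = \frac{mr-n}{n(m-1)}$; taking square roots (legitimate since $mr>n$) produces~\eqref{welch_block}. For the equality statement, note that every step above is an equivalence, so $\mu(A)$ meets the bound~\eqref{welch_block} precisely when the spectral distance bound~\eqref{spectral_bound} is attained; by the concluding sentence of Lemma~\ref{spectral_lemma} this makes the subspaces equi-isoclinic, which is exactly the assertion that all singular values of all cross-Gram matrices $A_i^{\ast}A_j$, $i\neq j$, are equal in modulus.

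There is no serious obstacle in this argument once Lemma~\ref{spectral_lemma} is available --- the proof is essentially a change of variables. The real work sits upstream, in the Rankin-type bound of Lemma~\ref{chordal_lemma} and its spectral refinement Lemma~\ref{spectral_lemma}, both of which we take as given here; the only thing one must notice is that $\mu(A)$ equals the ``sine of the smallest principal angle, worst case over pairs of subspaces'' and is therefore complementary to the spectral distance, so that a packing \emph{upper} bound immediately translates into a coherence \emph{lower} bound.
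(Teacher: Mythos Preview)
Your proposal is correct and follows essentially the same route as the paper's own proof: both simply substitute the identity $\min_{i\neq j}[d_S(\SSS_i,\SSS_j)]^2=1-\mu(A)^2$ into the spectral distance bound of Lemma~\ref{spectral_lemma}, rearrange algebraically, and read off the equi-isoclinic conclusion from the equality clause of that lemma. Your additional remarks (that $\|A_i^{\ast}A_j\|_2\le 1$ and that $mr>n$ legitimizes the square root) are harmless clarifications, not departures from the paper's argument.
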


\begin{proof}
Combining (\ref{spectral_def}) and (\ref{spectral_bound}), we have
\begin{equation}\label{chordal_subst}
1-\max_{i\neq j}\|A_i^{\ast}A_j\|_2^2\le\frac{n-r}{n}\cdot\frac{m}{m-1},
\end{equation}
which rearranges to give
\begin{equation}\label{frob_bound}
\max_{i\neq j}\|A_i^{\ast}A_j\|_2^2\geq 1-\frac{n-r}{n}\cdot\frac{m}{m-1}=\frac{n(m-1)-m(n-r)}{n(m-1)}=\frac{mr-n}{n(m-1)},
\end{equation}
and (\ref{welch_block}) follows. Finally, if (\ref{spectral_bound}) holds with equality, it follows from Lemma~\ref{spectral_lemma} that the subspaces are equi-isoclinic, which implies that all singular values among all matrices $A_i^{\ast}A_j$ for $i\neq j$ are equal in modulus.
\end{proof}

Theorem~\ref{block_welch} was essentially proved in~\cite{lemmens_seidel}, which gives an upper bound on the number of subspaces $m$ for a given principal angle. Theorem~\ref{block_welch} extends the Welch bound for coherence of columns of a matrix~\cite{welch}, and reduces to the Welch bound for $r=1$. Similarly, the condition that the subspaces are equi-isoclinic is an extension of the condition for equality in the Welch bound, namely that the columns of the matrix are required to be an equiangular tight frame~\cite{strohmer}.

Frames with optimal block coherence cannot have arbitrarily many blocks. We next give an upper bound on $m$, the number of blocks in the matrix, or equivalently the number of subspaces in an optimal Grassmann packing. Since we require the subspaces to be equi-isoclinic, we may use the following bound on the maximum number of equi-isoclinic subspaces, along with a tighter bound in the case where $A$ is real.

\begin{theorem}[Subspace bound for exact optimality{~\cite[Theorem 3.5]{lemmens_seidel},~\cite[Theorem 8]{alternating}}]\label{equi_bound}
The number of $r$-dimensional equi-isoclinic subspaces in $\CC^n$ cannot exceed $n^2-r^2+1$. The number of $r$-dimensional equi-isoclinic subspaces in $\RR^n$ cannot exceed $\frac{1}{2} n(n+1)-\frac{1}{2} r(r+1)+1$.
\end{theorem}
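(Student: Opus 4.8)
\emph{Proof proposal.} The plan is the classical ``absolute bound'' argument, transferred from subspaces to their orthogonal projections. For $i=1,\dots,m$ set $P_i:=A_iA_i^{\ast}$, the orthogonal projection onto $\SSS_i$; each $P_i$ is Hermitian, idempotent, of rank $r$, with $\trace(P_i)=r$. The equi-isoclinic hypothesis says that every $A_i^{\ast}A_j$ with $i\neq j$ has all singular values equal, so $(A_i^{\ast}A_j)(A_i^{\ast}A_j)^{\ast}=\lambda I_r$ for a common $\lambda=\|A_i^{\ast}A_j\|_2^2$, and $\lambda<1$ because the $\SSS_i$ are distinct (if $\lambda=1$ then $A_i^{\ast}A_j$ is unitary and $\SSS_i=\SSS_j$). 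I record two consequences: $P_iP_jP_i=A_i(A_i^{\ast}A_j)(A_i^{\ast}A_j)^{\ast}A_i^{\ast}=\lambda P_i$ for $i\neq j$, and $\trace(P_iP_j)=\|A_i^{\ast}A_j\|_F^2=\lambda r$ for $i\neq j$, while $\trace(P_i^2)=r$.

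First I would show $P_1,\dots,P_m$ are linearly independent in the real inner-product space of Hermitian matrices equipped with $\langle X,Y\rangle=\trace(XY)$. Their Gram matrix is $r(1-\lambda)I_m+r\lambda J_m$ (with $J_m$ the all-ones matrix), whose eigenvalues $r(1-\lambda)$ and $r\bigl(1+(m-1)\lambda\bigr)$ are strictly positive since $0\le\lambda<1$ and $r\ge 1$; hence it is nonsingular. By itself this yields only $m\le n^2$ (resp.\ $m\le\tfrac12 n(n+1)$ in the real symmetric case), so the remaining task is to recover the missing $r^2-1$ (resp.\ $\tfrac12 r(r+1)-1$).

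The key step is to enlarge the independent set using the structure of $\SSS_1$. Let $W_1$ be the space of Hermitian (resp.\ symmetric) matrices $X$ with $P_1XP_1=X$, i.e.\ those supported on $\SSS_1$; it has real dimension $r^2$ (resp.\ $\tfrac12 r(r+1)$). Pick a basis $E_1,\dots,E_t$ of its trace-zero subspace, so $t=r^2-1$ (resp.\ $t=\tfrac12 r(r+1)-1$); then $\{P_1,E_1,\dots,E_t\}$ is a basis of $W_1$ (any dependence, after taking traces, kills the $P_1$-coefficient, hence all coefficients). I claim $\{P_1,\dots,P_m,E_1,\dots,E_t\}$ is linearly independent. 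Given $\sum_i a_iP_i+\sum_k b_kE_k=0$, apply the compression $X\mapsto P_1XP_1$: it fixes $P_1$ and each $E_k$ and sends $P_i\mapsto\lambda P_1$ for $i\neq 1$, giving $\bigl(a_1+\lambda\sum_{i\neq 1}a_i\bigr)P_1+\sum_k b_kE_k=0$ inside $W_1$; since $\{P_1,E_1,\dots,E_t\}$ is a basis of $W_1$, all $b_k=0$, and then $\sum_i a_iP_i=0$ forces all $a_i=0$ by the previous paragraph. Counting in the space of Hermitian (resp.\ symmetric) $n\times n$ matrices gives $m+r^2-1\le n^2$ (resp.\ $m+\tfrac12 r(r+1)-1\le\tfrac12 n(n+1)$), which is the claimed bound.

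I expect the main obstacle to be spotting this augmentation: the point is that $P_iP_jP_i=\lambda P_i$ makes the compression onto $\SSS_1$ collapse all of $P_2,\dots,P_m$ onto the single line $\RR P_1$, which is exactly what leaves room to adjoin the $r^2-1$ trace-zero matrices supported on $\SSS_1$ without destroying independence. The rest is bookkeeping, and the real case is identical with ``Hermitian'' replaced by ``symmetric'' and the dimension counts adjusted accordingly.
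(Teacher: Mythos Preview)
The paper does not give its own proof of this theorem; it simply cites \cite[Theorem~3.5]{lemmens_seidel} and \cite[Theorem~8]{alternating} and moves on. Your argument is correct and is essentially the classical absolute-bound proof going back to Lemmens and Seidel: embed the projections $P_i=A_iA_i^{\ast}$ in the real inner-product space of Hermitian (resp.\ symmetric) $n\times n$ matrices, use the Gram matrix $r(1-\lambda)I_m+r\lambda J_m$ to get linear independence of the $P_i$, and then exploit the identity $P_1P_iP_1=\lambda P_1$ to adjoin a trace-zero basis of the matrices supported on $\SSS_1$ without destroying independence. The dimension count then yields exactly the stated bounds. So your proof matches the argument in the references to which the paper defers.
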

\vspace{0.2cm}
\subsubsection{Constructions}\label{constructions}

We turn now to constructions which achieve the lower bound (\ref{welch_block}). Both our constructions in Section~\ref{worst_case} will take the form of a Kronecker product between a matrix with good column coherence properties (block coherence with $r=1$) and a unitary matrix. We begin with a definition. First a definition: a matrix $A$ is said to be an \textit{equiangular tight frame} (ETF) if it satisfies the following conditions:
\begin{itemize}
\item The columns of $A$ are unit norm.
\item The inner products between pairs of different columns are equal in modulus.
\item The columns form a tight frame, that is $AA^{\ast}=(mr/n)I$.
\end{itemize}
ETFs have the important property that they minimize the worst-case column coherence (block coherence with $r=1$), achieving the Welch bound.

\begin{lemma}[Welch bound equality~\cite{strohmer}]\label{welch_lemma}
Let $P\in\CC^{n\times m}$ be an ETF and let $r=1$. Then
$$\mu(P)=\sqrt{\frac{m-n}{n(m-1)}}.$$
\end{lemma}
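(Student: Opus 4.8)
The plan is to exploit the two defining properties of an ETF --- equiangularity and tightness --- by computing the first two moments of the spectrum of the Gram matrix in two different ways. Write $p_1,\ldots,p_m$ for the columns of $P$, and let $G=P^{\ast}P\in\CC^{m\times m}$ be the Gram matrix. By the unit-norm and equiangularity conditions, $G$ has diagonal entries equal to $1$ and off-diagonal entries of common modulus $\alpha:=|p_i^{\ast}p_j|$, $i\neq j$. Since $r=1$, each ``block'' is a single column, so $\mu(P)=\max_{i\neq j}|p_i^{\ast}p_j|=\alpha$ exactly; hence it suffices to evaluate $\alpha$.

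Next I would record the spectrum of $G$. The nonzero eigenvalues of $G=P^{\ast}P$ coincide with those of $PP^{\ast}$, and the tight-frame condition gives $PP^{\ast}=(mr/n)I_n=(m/n)I_n$ since $r=1$. Therefore $G$ has eigenvalue $m/n$ with multiplicity $n$ and eigenvalue $0$ with multiplicity $m-n$ (here $m>n$, as $P$ is a frame). As a consistency check, $\trace(G)=m=n\cdot(m/n)$.

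Then I would double-count $\trace(G^2)$. Entrywise, $\trace(G^2)=\|G\|_F^2=m+m(m-1)\alpha^2$, since there are $m$ diagonal entries of modulus $1$ and $m(m-1)$ off-diagonal entries of modulus $\alpha$. Spectrally, $\trace(G^2)=n\cdot(m/n)^2=m^2/n$. Equating the two expressions gives $m(m-1)\alpha^2=m^2/n-m=m(m-n)/n$, and solving yields $\alpha^2=(m-n)/\bigl(n(m-1)\bigr)$, i.e. $\mu(P)=\alpha=\sqrt{(m-n)/(n(m-1))}$, as claimed.

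There is no genuine obstacle here --- the proof is a short moment computation. The only points needing care are (i) justifying that the nonzero part of the spectrum of $G$ is exactly what tightness dictates (which uses $m>n$, so that $G$ is rank-deficient, consistent with the frame setting), and (ii) observing that for $r=1$ the equiangularity hypothesis makes $\mu(P)$ equal to $\alpha$ on the nose rather than merely bounding it, so that the statement is an equality and recovers the Welch bound, i.e. Theorem~\ref{block_welch} specialised to $r=1$.
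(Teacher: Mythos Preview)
Your argument is correct and is the standard trace/moment computation establishing Welch-bound equality for equiangular tight frames. Note, however, that the paper does not actually supply its own proof of this lemma: it is stated with a citation to~\cite{strohmer} and used as a black box, so there is no in-paper argument to compare against. Your proof is precisely the classical one that would be found by following that reference.
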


The next result gives a construction with minimum worst-case block coherence. This result appear in various forms throughout~\cite{lemmens_seidel}, but we include a proof of it here for completeness.

\begin{theorem}[Kronecker product construction 1{~\cite{lemmens_seidel}}]\label{kronecker_ETF}
Let $A=P\otimes Q$ where $P\in\CC^{(n/r)\times m}$ is an ETF and $Q\in\CC^{r\times r}$ is a unitary matrix. Then the columns in each block are orthonormal, and
\begin{equation}\label{kron_bound}
\mu(A)=\sqrt{\frac{mr-n}{n(m-1)}}.
\end{equation}
\end{theorem}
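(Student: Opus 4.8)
The plan is to exploit the mixed-product property of the Kronecker product, $(M\otimes N)(R\otimes S)=(MR)\otimes(NS)$, which collapses every cross-block inner product of $A$ to a scalar multiple of the identity and thereby reduces the block coherence of $A$ to the ordinary column coherence of $P$. First I would fix notation: write $p_1,\dots,p_m\in\CC^{n/r}$ for the unit-norm columns of the ETF $P$ and $q_1,\dots,q_r\in\CC^r$ for the orthonormal columns of $Q$. Then the columns of $A=P\otimes Q$ are exactly the vectors $p_k\otimes q_l$, and under the natural ordering the $i$-th block of $r$ consecutive columns is $A_i=p_i\otimes Q$, an $n\times r$ matrix. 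Getting this indexing right is the one bookkeeping point that needs care.

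Next I would record the within-block structure: $(p_i\otimes q_l)^{\ast}(p_i\otimes q_{l'})=(p_i^{\ast}p_i)(q_l^{\ast}q_{l'})=\delta_{ll'}$, so the columns of each $A_i$ are orthonormal and in particular unit-norm, confirming that $A$ has the required frame form. (If desired one can also note $AA^{\ast}=(PP^{\ast})\otimes(QQ^{\ast})=(mr/n)I_n$, using the tightness of $P$, so that $A$ is itself a unit-norm tight frame.)

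The crux is the cross-block computation: for $i\neq j$,
\[
A_i^{\ast}A_j=(p_i\otimes Q)^{\ast}(p_j\otimes Q)=(p_i^{\ast}\otimes Q^{\ast})(p_j\otimes Q)=(p_i^{\ast}p_j)\otimes(Q^{\ast}Q)=(p_i^{\ast}p_j)\,I_r,
\]
where unitarity of $Q$ gives $Q^{\ast}Q=I_r$. Hence $\|A_i^{\ast}A_j\|_2=|p_i^{\ast}p_j|$, so $\mu(A)=\max_{i\neq j}|p_i^{\ast}p_j|$, which is precisely the worst-case column coherence of $P$ (i.e.\ block coherence with $r=1$). Finally I would invoke Lemma~\ref{welch_lemma}: since $P\in\CC^{(n/r)\times m}$ is an ETF, $\mu(P)=\sqrt{\tfrac{m-n/r}{(n/r)(m-1)}}$, and multiplying numerator and denominator inside the root by $r$ yields $\mu(A)=\sqrt{\tfrac{mr-n}{n(m-1)}}$, as claimed. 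There is no substantial obstacle here — the argument is a short calculation — with the only mild subtlety being the correct identification of the blocks $A_i=p_i\otimes Q$; once that is in place, the mixed-product identity, the unitarity of $Q$, and the cited Welch-equality lemma do all the work.
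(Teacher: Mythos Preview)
Your proof is correct and follows essentially the same approach as the paper's: both identify the blocks as $A_i=p_i\otimes Q$, reduce $A_i^{\ast}A_j$ to $(p_i^{\ast}p_j)I_r$ via the Kronecker mixed-product identity (the paper writes this out as an explicit sum rather than invoking the identity by name), and then apply Lemma~\ref{welch_lemma} to the ETF $P$. The only cosmetic difference is that you verify block orthonormality by computing column inner products directly, whereas the paper computes $A_i^{\ast}A_i$ as a whole; the content is the same.
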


\begin{proof}
Let $A$ be defined as in the theorem. Let us write $P=\begin{bmatrix} p_1 & p_2 &\ldots & p_m\end{bmatrix}$ for the columns of $P$, and
$$p_i=\left[\begin{array}{l}
P_{1i}\\
P_{2i}\\
\vdots
\end{array}\right]$$
for the entries in each column. Then we have
\begin{equation}\label{matrix_mult}
\|A_i^{\ast}A_j\|_2=\Big\|\sum_{s=1}^{n/r}(P_{si} Q)^{\ast}(P_{sj} Q)\Big\|_2=\|\langle p_i,p_j\rangle Q^{\ast}Q\|_2=\langle p_i,p_j\rangle,
\end{equation}
where the last equality holds since $Q$ is unitary. Since $P$ is an ETF, it follows from Lemma~\ref{welch_lemma} that, for $i\neq j$,
$$\max_{i\neq j}|\langle p_i,p_j\rangle|^2 = \frac{m-\frac{n}{r}}{\frac{n}{r}(m-1)}=\frac{mr-n}{n(m-1)},$$
which combines with (\ref{matrix_mult}) to yield
$$\mu(A)=\max_{i\neq j}\|A_i^{\ast}A_j\|_2=\sqrt{\frac{mr-n}{n(m-1)}},$$
which proves (\ref{kron_bound}). It remains to show that the columns in each block are orthonormal. Writing $Q=\begin{bmatrix} q_1 & q_2 &\ldots &q_r\end{bmatrix}$ for the columns of $Q$, then the columns of $A$ are $p_i\otimes q_j$ for $1\le i\le(n/r)$ and $1\le j\le r$, and a standard Kronecker product identity gives
$$(p_i\otimes q_j)^{\ast}(p_i\otimes q_j)=(p_i^{\ast}\otimes p_i)(q_j^{\ast}\otimes q_j)=1,$$
which proves that the columns are unit norm. Finally,
$$A_i^{\ast}A_i=\sum_{s=1}^{n/r}(P_{si} Q)^{\ast}(P_{si} Q)=\langle p_i,p_i\rangle Q^{\ast}Q=I,$$
where the last equality follows since $P$ has unit norm columns and $Q$ is unitary, which implies that the columns within each block are orthonormal, and the proof is complete.
\end{proof}

Kronecker product constructions have also previously been considered in~\cite{eldar}.

To construct a matrix for a given triple $(m,n,r)$ in this way, there must exist an ETF of size $(n/r)\times m$, and therefore $r$ must necessarily divide $n$. ETFs can only exist, however, for certain sizes, see~\cite{existence,complex} for insightful studies. Various ETFs with small dimensions have been explicitly found~\cite{existence,complex}, and several infinite families of ETFs are given in~\cite{steiner,existence,complex,kirkman}. Alternating projection algorithms were also used in~\cite{ETF_alternating} to construct approximations to ETFs.

In our construction, the ETF $P$ must obey $m\le(n/r)^2$~\cite{MIMO}, and division by $r^2$ means that the number of blocks is sub-optimal compared with the bound established in Theorem~\ref{equi_bound}. This sub-optimality is clearly more marked as $r$ increases. If the Kronecker product construction is used in conjunction with any of the infinite families of ETFs mentioned in the previous paragraph, further sub-optimality results from the fact that, for these families, the number of columns in the ETF is far short of the optimal number~\cite{MIMO}.

\subsection{Bounds and constructions for nearly optimal worst-case block coherence}\label{deterministic}

We now present new results concerning bounds and constructions for optimal worst-case block coherence for a more restrictive family of matrices, namely where the matrix $A$ consists of a union of orthonormal bases. Such constructions have proved important since they can possess almost optimal coherence properties while allowing for more frame elements than would be possible with equiangular tight frames~\cite{kerdock}. We will see that unions of orthonormal bases will also provide useful constructions with almost optimal block coherence.

\vspace{0.2cm}

\subsubsection{Bounds}

Let us suppose then that $A$ consists of orthonormal bases of size $n\times n$ which are split into $n/r$ blocks of size $n\times r$, for which we necessarily require that $r$ divides $n$. We have the following result.

\begin{theorem}[{Unions of orthonormal bases}]\label{welch_orth}
Let $A$ be a union of orthonormal bases. Then
\begin{equation}\label{orth_bound}
\mu(A)\geq\sqrt{\frac{r}{n}}.
\end{equation}
If the bound is met, $\|A_i^{\ast}A_j\|_2$ is equal for all pairs of blocks $A_i$ and $A_j$ from different orthonormal bases.
\end{theorem}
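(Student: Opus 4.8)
The plan is to exploit the fact that each orthonormal basis in the union resolves the identity, so within a basis the blocks are mutually orthogonal and the only nonzero inner products come from pairs of blocks lying in \emph{different} bases. Fix a block $A_i$ belonging to one basis and let $B_1,\dots,B_{n/r}$ denote the blocks of a second (distinct) basis, so that the concatenation $[B_1\ \cdots\ B_{n/r}]$ is unitary and hence $\sum_{t=1}^{n/r}B_tB_t^{\ast}=I$. Multiplying on the left by $A_i^{\ast}$ and on the right by $A_i$, and using that the columns of $A_i$ are orthonormal so that $A_i^{\ast}A_i=I_r$, gives $\sum_{t=1}^{n/r}(B_t^{\ast}A_i)^{\ast}(B_t^{\ast}A_i)=I_r$. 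Taking the trace yields the key identity $\sum_{t=1}^{n/r}\|B_t^{\ast}A_i\|_F^2=r$.

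The second ingredient is the elementary bound $\|X\|_F^2\le\mathrm{rank}(X)\,\|X\|_2^2$, which for the $r\times r$ matrices $X=B_t^{\ast}A_i$ reads $\|B_t^{\ast}A_i\|_F^2\le r\,\|B_t^{\ast}A_i\|_2^2$. Combining this with the trace identity and averaging over the $n/r$ values of $t$ produces an index $t$ with $\|B_t^{\ast}A_i\|_F^2\ge r/(n/r)=r^2/n$, and therefore $\|B_t^{\ast}A_i\|_2^2\ge(1/r)(r^2/n)=r/n$. Since $A_i$ and $B_t$ belong to different bases they form a legitimate pair of distinct blocks of $A$, so $\mu(A)\ge\|B_t^{\ast}A_i\|_2\ge\sqrt{r/n}$, which is (\ref{orth_bound}).

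For the equality characterization, suppose $\mu(A)=\sqrt{r/n}$. Then for every block $A_i$ and every other basis with blocks $B_1,\dots,B_{n/r}$ we have $\|B_t^{\ast}A_i\|_F^2\le r\,\|B_t^{\ast}A_i\|_2^2\le r\cdot(r/n)=r^2/n$ for each $t$. Summing these $n/r$ inequalities and comparing with $\sum_{t}\|B_t^{\ast}A_i\|_F^2=r$ forces every inequality to be tight, whence $\|B_t^{\ast}A_i\|_2=\sqrt{r/n}$ for all $t$. As $A_i$ and the second basis were arbitrary, $\|A_i^{\ast}A_j\|_2=\sqrt{r/n}$ for every pair of blocks $A_i,A_j$ from different orthonormal bases, as claimed.

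I do not anticipate a serious obstacle here: the argument is essentially a trace computation plus one averaging step. The only places requiring care are the Frobenius-to-spectral conversion (keeping the factor $r$ arising from $\mathrm{rank}(B_t^{\ast}A_i)\le r$) and the correct use of the averaging bound $\max_t\|B_t^{\ast}A_i\|_F^2\ge r/(n/r)$; the remainder is bookkeeping with the resolution of identity $\sum_tB_tB_t^{\ast}=I$ within a basis.
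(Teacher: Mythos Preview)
Your proof is correct and follows essentially the same approach as the paper: compute a Frobenius-norm identity from the unitary structure of the bases, convert Frobenius to spectral via $\|X\|_F^2\le r\,\|X\|_2^2$, and then average to extract one pair achieving the bound. The only cosmetic difference is that the paper sums over all $(n/r)^2$ block pairs between two fixed bases (obtaining $\sum_{i,j}\|P_i^{\ast}Q_j\|_F^2=\|P^{\ast}Q\|_F^2=n$), whereas you fix one block $A_i$ and sum only over the $n/r$ blocks of the other basis using the resolution of identity $\sum_t B_tB_t^{\ast}=I$; the equality analysis is then identical.
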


\begin{proof}
Since $mr>n$, we have at least two orthonormal bases. Suppose $A$ includes the bases $P\in\CC^{n\times n}$ and $Q\in\CC^{n\times n}$, which are divided into the blocks $\{P_1,P_2,\ldots,P_{n/r}\}$ and $\{Q_1,Q_2,\ldots,Q_{n/r}\}$ respectively. Then we have
\begin{equation}\label{gram_compare}
\sum_{i=1}^{n/r}\sum_{j=1}^{n/r}\|P_i^{\ast}Q_j\|_F^2=\|P^{\ast}Q\|_F^2=n,
\end{equation}
where the last equality follows since $P^{\ast}Q$ is unitary. It follows from (\ref{gram_compare}) by a standard norm inequality that
$$\sum_{i=1}^{n/r}\sum_{j=1}^{n/r}\|P_i^{\ast}Q_j\|_2^2\geq\frac{n}{r},$$
from which it follows that
\begin{equation}\label{divide}
\left(\frac{n}{r}\right)^2\max_{i\neq j}\|P_i^{\ast}Q_j\|_2^2\geq\frac{n}{r},
\end{equation}
and dividing by $(n/r)^2$ gives
$\max_{i\neq j}\|P_i^{\ast}Q_j\|_2^2\geq\frac{r}{n}.$
The same argument applies to all pairs of orthonormal bases, which yields (\ref{orth_bound}). Finally note that the bound only holds if (\ref{divide}) holds, which requires $\|P_i^{\ast}P_j\|_2$ to be equal for all $i$ and $j$, and since this holds for all pairs of orthonormal bases, the theorem is proved. \end{proof}

Note that the bound (\ref{orth_bound}) is tighter than (\ref{welch_block}), since
$$\sqrt{\frac{r}{n}}>\sqrt{\frac{mr-n}{n(m-1)}}$$
whenever $r<n$. However, they are asymptotically equivalent for large $m$, that is to say, for fixed $n$ and $r$,
$$\lim_{m\rightarrow\infty}\sqrt{\frac{mr-n}{n(m-1)}}=\sqrt{\frac{r}{n}}.$$

Analogous to Section~\ref{lower_bound}, a bound on the number of subspaces for unions of orthonormal bases can also be derived, by making another link with the world of Grassmannian packings. As well as the Rankin bound on the chordal distance of Lemma~\ref{chordal_lemma}, the authors of~\cite{packing} also proved a tighter bound which allows for a larger number of subspaces, which we state next.

\begin{lemma}[Tighter Rankin bound for chordal distance~{\cite[Corollary 5.3]{packing},~\cite[Corollary 1]{MIMO}}]\label{tighter_chordal_lemma}
For $m>n^2$,
\begin{equation}\label{tighter_chordal}
\min_{i\neq j} [d_C(\SSS_i,\SSS_j)]^2\le\frac{r(n-r)}{n}.
\end{equation}
If the bound is met, $m\le 2(n+1)(n-1)$. Furthermore, if $A$ is real and $m>\frac{1}{2}n(n+1)$, the bound (\ref{tighter_chordal}) also holds, and if the bound is met, $m\le(n-1)(n+2)$.
\end{lemma}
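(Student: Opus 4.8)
The plan is to reprise the strategy of Conway et al.\ used for Lemma~\ref{chordal_lemma} --- embedding the Grassmannian into a Euclidean sphere of matrices and invoking the Rankin bound for spherical codes --- but now in the regime where the sharper, ``orthoplex'' form of the Rankin bound applies. First I would represent each subspace $\SSS_i$ by its orthogonal projection matrix $\Pi_i := A_i A_i^{\ast}$, which is Hermitian (real symmetric when $A$ is real), idempotent, and of trace $r$. Using $\trace(\Pi_i\Pi_j) = \trace\big((A_i^{\ast}A_j)(A_i^{\ast}A_j)^{\ast}\big) = \|A_i^{\ast}A_j\|_F^2$ together with (\ref{chordal_def}), one gets
\begin{equation}\label{proj_dist}
\|\Pi_i-\Pi_j\|_F^2 = 2r - 2\|A_i^{\ast}A_j\|_F^2 = 2\,[d_C(\SSS_i,\SSS_j)]^2 ,
\end{equation}
so the squared chordal distance is, up to a factor of $2$, the squared Frobenius distance between the $\Pi_i$ in the real inner-product space of Hermitian matrices.

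Next I would centre the configuration by setting $\widetilde\Pi_i := \Pi_i - (r/n) I$, which has trace zero and satisfies $\|\widetilde\Pi_i\|_F^2 = r - r^2/n = r(n-r)/n =: \rho^2$. Thus the $\widetilde\Pi_i$ lie on a sphere of radius $\rho$ inside the real vector space $V$ of trace-zero Hermitian matrices, which has dimension $d = n^2-1$ in the complex case and $d = \tfrac12 n(n+1)-1$ in the real symmetric case; moreover translation preserves distances, so (\ref{proj_dist}) still holds with $\widetilde\Pi$ in place of $\Pi$. The key step is then the Rankin bound for spherical codes in the ``many points'' regime: if $m > d+1$ points lie on a sphere of radius $\rho$ in $\RR^d$, their minimum pairwise squared distance is at most $2\rho^2$. (This follows from $\|x_i-x_j\|^2 = 2\rho^2 - 2\langle x_i,x_j\rangle$ and the standard fact that at most $2d$ vectors in $\RR^d$ can have pairwise non-positive inner products.) Applying this with $\rho^2 = r(n-r)/n$, and noting that $m > n^2$ is exactly $m > d+1$ in the complex case while $m > \tfrac12 n(n+1)$ is exactly $m > d+1$ in the real case, gives $2\min_{i\neq j}[d_C(\SSS_i,\SSS_j)]^2 \le 2\rho^2$, which is (\ref{tighter_chordal}).

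It remains to handle the equality clauses. If (\ref{tighter_chordal}) holds with equality, then by (\ref{proj_dist}) every pair satisfies $\|\widetilde\Pi_i-\widetilde\Pi_j\|_F^2 \ge 2\rho^2$, equivalently $\langle \widetilde\Pi_i,\widetilde\Pi_j\rangle \le 0$; the Rankin bound on such ``obtuse'' configurations then forces $m \le 2d$, i.e.\ $m \le 2(n^2-1) = 2(n+1)(n-1)$ in the complex case and $m \le n(n+1)-2 = (n-1)(n+2)$ in the real case. I expect the main work to lie in stating the two Rankin inequalities precisely and, if a self-contained treatment is wanted, proving them --- in particular establishing the ``at most $2d$ obtuse vectors'' fact and checking its equality characterisation --- and in carefully bookkeeping the ambient dimension $d$ of the trace-zero Hermitian/symmetric matrices in the real versus complex settings; the rest is routine linear algebra.
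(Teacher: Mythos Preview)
The paper does not actually prove this lemma; it is quoted as a known result from~\cite{packing} and~\cite{MIMO} and then invoked in the proof of Theorem~\ref{relaxed_bound}. Your outline is precisely the standard Conway--Hardin--Sloane argument behind those references: embed $G(n,r)$ isometrically into the sphere of trace-zero Hermitian (or real symmetric) matrices via $\SSS_i\mapsto\Pi_i-(r/n)I$, and apply the Rankin orthoplex bound in that ambient Euclidean space. The dimension count ($d=n^2-1$ complex, $d=\tfrac12 n(n+1)-1$ real), the radius $\rho^2=r(n-r)/n$, and the equality bound $m\le 2d$ are all correct.

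One small slip to fix: your parenthetical justification of the inequality cites the wrong Rankin fact. To deduce (\ref{tighter_chordal}) under the hypothesis $m>d+1$ you need that at most $d+1$ vectors in $\RR^d$ can have pairwise \emph{strictly} negative inner products (so $m>d+1$ forces some $\langle\widetilde\Pi_i,\widetilde\Pi_j\rangle\ge 0$, hence some squared distance $\le 2\rho^2$). The ``at most $2d$ with non-positive inner products'' fact is what you need for the equality clause, not for the inequality itself; as written, your parenthetical would only establish (\ref{tighter_chordal}) for $m>2d$. Both facts are classical and easy, so this is a bookkeeping correction rather than a real gap.
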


We can deduce from Lemma~\ref{tighter_chordal_lemma} a relaxed bound on the maximum number of blocks.

\begin{theorem}[{Subspace bound for unions of orthonormal bases}]\label{relaxed_bound}
Let $A$ be a union of orthonormal bases, and suppose $\mu(A)=\sqrt{\frac{r}{n}}$. Then $m\le 2(n+1)(n-1)$, and furthermore if $A$ is real, $m\le(n-1)(n+2)$.
\end{theorem}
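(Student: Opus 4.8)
The plan is to show that the hypothesis $\mu(A)=\sqrt{r/n}$ forces the subspaces corresponding to the blocks of $A$ to meet the tighter Rankin bound~(\ref{tighter_chordal}) of Lemma~\ref{tighter_chordal_lemma} with equality, and then to read off the conclusion from that lemma. First I would pin down the cross-basis Gram blocks exactly. Fix two orthonormal bases $P,Q$ occurring in $A$, with block partitions $\{P_1,\ldots,P_{n/r}\}$ and $\{Q_1,\ldots,Q_{n/r}\}$. As in the proof of Theorem~\ref{welch_orth}, unitarity of $P^{\ast}Q$ gives $\sum_{i,j}\|P_i^{\ast}Q_j\|_F^2=n$, hence (using $\|X\|_2^2\ge\|X\|_F^2/r$ for an $r\times r$ matrix $X$) $\sum_{i,j}\|P_i^{\ast}Q_j\|_2^2\ge n/r$. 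Since there are $(n/r)^2$ terms, each at most $\mu(A)^2=r/n$, and $(n/r)^2\cdot(r/n)=n/r$, equality must hold throughout, so $\|P_i^{\ast}Q_j\|_2^2=r/n$ for every $i,j$ and every pair of bases.

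Next I would upgrade this spectral information to chordal-distance information. Since each $P_i^{\ast}Q_j$ is $r\times r$, $\|P_i^{\ast}Q_j\|_F^2\le r\,\|P_i^{\ast}Q_j\|_2^2=r^2/n$; but the $(n/r)^2$ Frobenius norms sum to $n=(n/r)^2\cdot(r^2/n)$, so every cross-basis block pair in fact satisfies $\|P_i^{\ast}Q_j\|_F^2=r^2/n$. Hence $[d_C(\SSS_i,\SSS_j)]^2=r-r^2/n=r(n-r)/n$ whenever $\SSS_i,\SSS_j$ come from different bases, whereas for two blocks of the same basis the columns are orthonormal, so $P_i^{\ast}P_j=0$ and $[d_C(\SSS_i,\SSS_j)]^2=r>r(n-r)/n$. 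Because $mr>n$ guarantees at least two bases, there is at least one cross-basis pair, and therefore $\min_{i\neq j}[d_C(\SSS_i,\SSS_j)]^2=r(n-r)/n$; that is, the subspaces meet the bound~(\ref{tighter_chordal}) with equality.

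Finally I would dispatch the bound on $m$ by a case split on whether $m$ is large enough for Lemma~\ref{tighter_chordal_lemma} to apply. In the complex case: if $m>n^2$, then equality in~(\ref{tighter_chordal}) forces $m\le 2(n+1)(n-1)$ by Lemma~\ref{tighter_chordal_lemma}; if instead $m\le n^2$, then since $n\ge 2r\ge 2$ we have $n^2\le 2n^2-2=2(n+1)(n-1)$, so $m\le 2(n+1)(n-1)$ holds anyway. The real case is identical using the real part of Lemma~\ref{tighter_chordal_lemma}: if $m>\frac{1}{2}n(n+1)$ equality forces $m\le(n-1)(n+2)$, and if $m\le\frac{1}{2}n(n+1)$ then the elementary inequality $\frac{1}{2}n(n+1)\le(n-1)(n+2)=n^2+n-2$ for $n\ge 2$ gives $m\le(n-1)(n+2)$ directly.

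The main obstacle I anticipate is the bridge between the spectral-norm hypothesis and the Frobenius-norm (chordal) quantity appearing in Lemma~\ref{tighter_chordal_lemma}: the point is that the crude inequality $\|X\|_F^2\le r\|X\|_2^2$ becomes an equality once it is combined with the rigid total budget $\|P^{\ast}Q\|_F^2=n$, and this is exactly what turns ``$\|P_i^{\ast}Q_j\|_2=\sqrt{r/n}$ for all cross-basis pairs'' into ``all cross-basis chordal distances attain the orthoplex bound.'' A secondary, more routine point is that Lemma~\ref{tighter_chordal_lemma} only applies above a threshold value of $m$, so the small-$m$ regime must be handled separately by the monotone quadratic comparisons indicated above.
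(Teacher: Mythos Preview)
Your argument is correct and reaches the conclusion through the same lemma as the paper, but you do more work than the paper does. The paper's proof never touches the union-of-orthobases structure or the budget identity $\|P^{\ast}Q\|_F^2=n$; it simply applies the one-line norm inequality $\|A_i^{\ast}A_j\|_F^2\le r\,\|A_i^{\ast}A_j\|_2^2\le r\cdot\mu(A)^2=r^2/n$ for \emph{every} pair $i\neq j$, which already yields $\min_{i\neq j}[d_C(\SSS_i,\SSS_j)]^2\ge r(n-r)/n$ and hence that the bound of Lemma~\ref{tighter_chordal_lemma} is met. Your route, by contrast, exploits the rigid Frobenius budget to upgrade all inequalities to equalities and deduce that every cross-basis block pair is equi-isoclinic with $\|P_i^{\ast}Q_j\|_F^2=r^2/n$; this is strictly stronger structural information than the paper extracts, but it is not needed for the stated theorem. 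On the other hand, you are more careful than the paper on one point: Lemma~\ref{tighter_chordal_lemma} only applies above the threshold $m>n^2$ (respectively $m>\tfrac{1}{2}n(n+1)$), and you explicitly dispose of the sub-threshold case by the elementary quadratic comparison, whereas the paper's proof leaves this case split implicit.
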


\begin{proof}[Proof of Theorem \ref{relaxed_bound}]
Suppose $\mu(A)=\sqrt{\frac{r}{n}}$. Then, by (\ref{worstcase_def}),
$$r\left(1-\max_{i\neq j}\|A_i^{\ast}A_j\|_2^2\right)=r\left(1-\frac{r}{n}\right)=\frac{r(n-r)}{n},$$
which combines with (\ref{chordal_def}) and a standard norm inequality to give
$$\min_{i\neq j} [d_C(\SSS_i,\SSS_j)]^2=r-\max_{i\neq j}\|A_i^{\ast}A_j\|_F^2\geq r\left(1-\max_{i\neq j}\|A_i^{\ast}A_j\|_2^2\right)=\frac{r(n-r)}{n}.$$
It follows that the bound in Lemma~\ref{tighter_chordal_lemma} is met, and so we may apply the lemma to deduce both results.
\end{proof}
\vspace{0.2cm}
\subsubsection{Constructions}

In order to obtain matrices with a larger number of blocks, we now consider constructions of the form $P\otimes Q$ in which $P$ is a union of orthobases. We have the following result.

\begin{theorem}[{Kronecker product construction 2}]\label{kronecker_orth}
Let $A=P\otimes Q$ where $P\in\CC^{(n/r)\times m}$ is a union of orthobases such that the any pair of columns $(p_i,p_j)$ of $P$ from different orthobases satisfies
\begin{equation}\label{P_cond}
p_i^{\ast}p_j=\sqrt{\frac{r}{n}},
\end{equation}
and where $Q$ is a unitary matrix. Then $A$ itself is a union of orthobases, the columns in each block are orthonormal, and
\begin{equation}\label{kron_bound2}
\mu(A)=\sqrt{\frac{r}{n}}.
\end{equation}
\end{theorem}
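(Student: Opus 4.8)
The plan is to reduce everything to the Kronecker mixed-product identity $(X\otimes Y)(Z\otimes W)=(XZ)\otimes(YW)$, exactly as in the proof of Theorem~\ref{kronecker_ETF}. First I would fix notation for the block structure of $A$: writing $P=\begin{bmatrix} p_1 & \cdots & p_m\end{bmatrix}$ and $Q=\begin{bmatrix} q_1 & \cdots & q_r\end{bmatrix}$ for the columns of $P$ and $Q$, the columns of $A=P\otimes Q$ are the vectors $p_i\otimes q_\ell$, and one checks that the $i$-th block of $A$ is exactly $A_i=p_i\otimes Q$, an $n\times r$ matrix. Since $P$ is a union of orthobases of $\CC^{n/r}$, the integer $m$ is a multiple of $n/r$, and if $P^{(k)}$ denotes the $k$-th such orthobasis then the corresponding $n$ consecutive columns of $A$ form the matrix $P^{(k)}\otimes Q$.

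Next I would verify the two structural claims. For the columns within a single block, $A_i^{\ast}A_i=(p_i\otimes Q)^{\ast}(p_i\otimes Q)=(p_i^{\ast}p_i)(Q^{\ast}Q)=I_r$, using that the columns of an orthobasis are unit norm and that $Q$ is unitary; hence the columns within each block are orthonormal. For the union-of-orthobases claim, $P^{(k)}\otimes Q$ is a square $n\times n$ matrix with $(P^{(k)}\otimes Q)^{\ast}(P^{(k)}\otimes Q)=(P^{(k)\ast}P^{(k)})\otimes(Q^{\ast}Q)=I_{n/r}\otimes I_r=I_n$, so it is unitary, and $A$ is the concatenation of these $mr/n$ orthobases.

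Finally I would compute $\mu(A)$. For any $i\neq j$ the mixed-product identity gives $A_i^{\ast}A_j=(p_i\otimes Q)^{\ast}(p_j\otimes Q)=(p_i^{\ast}p_j)(Q^{\ast}Q)=(p_i^{\ast}p_j)I_r$, so $\|A_i^{\ast}A_j\|_2=|p_i^{\ast}p_j|$. If $p_i$ and $p_j$ lie in the same orthobasis of $P$ this equals $0$; if they lie in different orthobases it equals $\sqrt{r/n}$ by hypothesis~(\ref{P_cond}). Since $mr>n$ forces at least two orthobases, some different-orthobasis pair exists and attains the maximum, so $\mu(A)=\max_{i\neq j}\|A_i^{\ast}A_j\|_2=\sqrt{r/n}$, which is~(\ref{kron_bound2}).

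The computation itself is pure Kronecker-product bookkeeping and presents essentially no obstacle; the only care needed is to track which column pairs of $P$ come from the same orthobasis (inner product $0$) versus different orthobases (inner product $\sqrt{r/n}$). The real content of the result lies downstream of its proof: the theorem is useful only once one exhibits a $P\in\CC^{(n/r)\times m}$ that is a union of orthobases satisfying~(\ref{P_cond}) with $m$ as large as possible. Condition~(\ref{P_cond}) is precisely the mutually-unbiased-bases condition in dimension $n/r$ --- where $|p_i^{\ast}p_j|=1/\sqrt{n/r}=\sqrt{r/n}$ between distinct bases --- so the natural next step is to substitute known families of mutually unbiased bases into this construction.
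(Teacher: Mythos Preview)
Your proposal is correct and follows essentially the same approach as the paper: both reduce everything to the Kronecker mixed-product identity, obtain $\|A_i^{\ast}A_j\|_2=|p_i^{\ast}p_j|$, and check the orthobasis structure via $(P^{(k)}\otimes Q)^{\ast}(P^{(k)}\otimes Q)=I_n$. If anything, your write-up is slightly more careful than the paper's in explicitly separating the same-orthobasis case ($p_i^{\ast}p_j=0$) from the different-orthobasis case and in noting that $mr>n$ guarantees the value $\sqrt{r/n}$ is actually attained.
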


\begin{proof}
Adopting the notation in the proof of Theorem~\ref{kronecker_ETF}, we may follow the argument in the proof of Theorem~\ref{kronecker_ETF} to deduce (\ref{matrix_mult}). It follows from (\ref{P_cond}) that
$$\max_{i\neq j}|\langle p_i,p_j\rangle| = \sqrt{\frac{r}{n}},$$
which combines with (\ref{matrix_mult}) to yield
$$\mu(A)=\max_{i\neq j}\|A_i^{\ast}A_j\|_2=\sqrt{\frac{r}{n}},$$
which yields (\ref{kron_bound2}). Since $P$ has unit norm columns, we may rehearse the argument of Theorem~\ref{kronecker_ETF} to deduce that the blocks of $A$ are orthonormal. It remains to show that $A$ itself is a union of orthobases. In this regard, let us write $P$ as
\[P=
\begin{bmatrix}
P_1 & P_2 &\cdots &P_{mr/n}
\end{bmatrix}
,\]
where $P_i$ is an orthonormal basis for all $i=1,2,\ldots,mr/n$. Then, for all $i\in\{1,2,\ldots,mr/n\}$, $A_i=P_i\otimes Q$, which is unitary, and the theorem is proved.
\end{proof}

We have therefore constructed a matrix which satisfies the worst-case block coherence lower bound (\ref{orth_bound}). To construct a matrix for a given triple $(m,n,r)$ in this way, there must exist a union of orthobases, $P$, of size $(n/r)\times m$, and therefore $r$ must necessarily divide $n$. In addition, we require the columns of $P$ to satisfy (\ref{P_cond}). Several families of matrices have been proposed which satisfy these conditions, including Alltop Gabor frames, discrete chirp frames and Kerdock frames, see~\cite{revisiting}. Each of these frames allow $P$ to have size $(n/r)\times(n/r)^2$, leading to a maximum of $(n/r)^2$ blocks in $A$.

\subsection{Random designs with low worst-case block coherence}\label{random}

We now analyse the block coherence of matrices consisting of random blocks, restricting our attention to the case of $A$ being real.\footnote{The analysis becomes more involved in the complex case.} By ``random'', we mean that each block in $A$ is an orthonormal basis for a subspace drawn from the uniform distribution on the Grassmann manifold $G(n,r)$ defined in Section~\ref{intro}, which is also the distribution with probability measure invariant under orthogonal transformations of $G(n,r)$~\cite{largest_angle}.

We first introduce some definitions of some special functions and distributions. 
Given $p>0$, define the gamma function~\cite[Sections 6.1]{handbook} to be
\begin{equation}\label{gamma_fn}
\Gm(p):=\int_0^{\infty} t^{p-1}e^{-t}\,dt,
\end{equation}
and, given $p>0$ and $r$ a positive integer, define the multivariate gamma function~\cite{largest_angle} to be
\begin{equation}\label{multi_gamma}
\Gm_r(p):=\pi^{\frac{1}{4} r(r-1)}\prod_{j=1}^r\Gm\left(p+\frac{1-j}{2}\right).
\end{equation}

Let $A_i\in\RR^{n\times r}$ and $A_j\in\RR^{n\times r}$ be orthonormal bases for two random subspaces, and let $\lm_1\geq\lm_2\geq\ldots\geq\lm_r$ be the squared singular values of $A_i^{\ast}A_j$. Then, if $n\geq 2r$, $(\lm_1,\lm_2,\ldots,\lm_r)$ are shown in~\cite{largest_angle} to follow the multivariate beta distribution.

\begin{lemma}[{Distribution of the squared singular values of $A_i^{\ast}A_j$~\cite{largest_angle}}]\label{random_pdf}
Let $A_i\in\RR^{n\times r}$ and $A_j\in\RR^{n\times r}$ be orthonormal bases for two real random subspaces, where $n\geq 2r$, and let $\lm_1\geq\lm_2\geq\ldots\geq\lm_r$ be the squared singular values of $A_i^{\ast}A_j$. Then $(\lm_1,\lm_2,\ldots,\lm_r)\sim \mathrm{Beta}_r\left(\frac{r}{2},\frac{n-r}{2}\right)$, with pdf
\begin{equation}\label{beta_pdf}
f(\lm_1,\lm_2,\ldots,\lm_r)=c_{n,r}\prod_{i<j}(\lm_i-\lm_j)\cdot\prod_{i=1}^r\lm_i^{-\frac{1}{2}}(1-\lm_i)^{\frac{1}{2}(n-2r-1)};\;\;\;\;\lm_i\geq 0\;\;\forall\;\;i,
\end{equation}
where
$$c_{n,r}:=\frac{\pi^{\frac{1}{2} r^2}\Gm_r\left(\frac{n}{2}\right)}{\left[\Gm_r\left(\frac{r}{2}\right)\right]^2\Gm_r\left(\frac{n-r}{2}\right)},$$
and where $\Gm_r(\cdot)$ is defined in (\ref{multi_gamma}).
\end{lemma}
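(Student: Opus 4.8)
The plan is to reduce this to a known result on the joint distribution of principal angles between random subspaces, rather than to compute anything from scratch. The object of interest is $A_i^{\ast}A_j$ where $A_i, A_j\in\RR^{n\times r}$ are orthonormal bases for independent uniformly random $r$-dimensional subspaces of $\RR^n$. The squared singular values $\lm_1\geq\cdots\geq\lm_r$ of $A_i^{\ast}A_j$ are precisely the squared cosines of the principal angles between $\SSS_i$ and $\SSS_j$. So the statement to be proved is exactly the classical fact that these squared cosines follow a matrix-variate (Jacobi/MANOVA) beta distribution $\mathrm{Beta}_r\!\left(\frac{r}{2},\frac{n-r}{2}\right)$ with the stated density; this is a standard result in multivariate statistics (the distribution of canonical correlations, or of the eigenvalues of a MANOVA matrix) and is the content of the cited reference~\cite{largest_angle}.

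First I would set up the correspondence carefully. Fix $\SSS_i$; by orthogonal invariance of the uniform (Haar-induced) measure on $G(n,r)$, we may take $\SSS_i$ to be the span of the first $r$ coordinate vectors, so $A_i=\begin{bmatrix} I_r \\ 0\end{bmatrix}$. Then $A_i^{\ast}A_j$ is simply the top $r\times r$ block of $A_j$, and $A_j^{\ast}A_i A_i^{\ast}A_j$ is the $r\times r$ matrix whose eigenvalues are the $\lm_k$. Writing $A_j$ with top block $B$ and bottom block $C$ (so $B^{\ast}B+C^{\ast}C=I_r$), the matrix $B^{\ast}B$ has eigenvalues $\lm_k$, and $C^{\ast}C$ has eigenvalues $1-\lm_k$. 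The second step is to identify the law of $B^{\ast}B$: since $A_j$ is an orthonormal $n\times r$ frame drawn Haar-uniformly from the Stiefel manifold, the partitioned quadratic form $B^{\ast}B$ has the matrix-variate beta (type I) distribution $\mathrm{Beta}_r(r/2,(n-r)/2)$ — equivalently, if $W_1\sim\mathrm{Wishart}_r(r,I)$ and $W_2\sim\mathrm{Wishart}_r(n-r,I)$ are independent, then $B^{\ast}B$ has the law of $(W_1+W_2)^{-1/2}W_1(W_1+W_2)^{-1/2}$. This is the standard projection characterization of the matrix beta distribution, valid precisely when $n\geq 2r$ so that both Wishart matrices are nonsingular.

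The third step is to pass from the matrix-variate density of $B^{\ast}B$ to the joint density of its ordered eigenvalues. This is done by the classical change of variables for a symmetric matrix in terms of its spectral decomposition: the Jacobian of the map $M\mapsto(\text{eigenvalues},\text{eigenvectors})$ contributes the Vandermonde factor $\prod_{k<l}(\lm_k-\lm_l)$, and integrating out the orthogonal eigenvector matrix over $O(r)$ produces a constant (expressible via $\Gm_r(r/2)$ and the volume of the orthogonal group). Combining this with the matrix beta density — whose trace-free form gives the $\prod_k \lm_k^{(r-r-1)/2}(1-\lm_k)^{(n-r-r-1)/2}=\prod_k\lm_k^{-1/2}(1-\lm_k)^{(n-2r-1)/2}$ factor — yields (\ref{beta_pdf}), and bookkeeping the normalizing constants from the two Wishart densities and the beta integral identity $\int M^{a-(r+1)/2}(I-M)^{b-(r+1)/2}\,dM = \Gm_r(a)\Gm_r(b)/\Gm_r(a+b)$ gives exactly the stated $c_{n,r}$.

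The main obstacle, and the only genuinely technical point, is the eigenvalue-density reduction: getting the powers and the normalizing constant exactly right requires care with the half-integer shifts that appear in the matrix-variate setting (the $(r+1)/2$ offsets in the matrix beta density and the $\pi^{r^2/2}/\Gm_r(r/2)$ factor from integrating over $O(r)$). Since this computation is carried out in full in~\cite{largest_angle}, I would cite it for the precise constant rather than reproduce it, and confine the written proof to (i) the orthogonal-invariance reduction to a coordinate subspace, (ii) the identification of the top block's Gram matrix as matrix-variate beta via the Wishart/projection characterization, and (iii) a one-line invocation of the standard eigenvalue Jacobian to land on (\ref{beta_pdf}). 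The hypothesis $n\geq 2r$ is used exactly once — to ensure the second Wishart block $W_2$ has enough degrees of freedom to be nonsingular, equivalently that generically no principal angle is zero.
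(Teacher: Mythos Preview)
The paper does not prove this lemma: it is stated with attribution to~\cite{largest_angle} and used as a black box, so there is no ``paper's own proof'' to compare against. Your sketch is a correct outline of the standard derivation (orthogonal-invariance reduction, identification of the top block's Gram matrix as matrix-variate beta via the Wishart/projection characterization, then the eigenvalue Jacobian), which is essentially what the cited reference does; for the purposes of this paper, simply citing the result as the authors do would suffice.
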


We are interested in $\lm_1$, the largest squared singular value. Note that the restriction $n\geq 2r$ is quite natural, since we have $\lm_1=1$ if $n<2r$ since the subspaces have nontrivial intersection.

While (\ref{beta_pdf}) is somewhat opaque, it is possible to derive from it a quantitative \textit{asymptotic} bound on $\lm_1$ in a framework in which the dimensions $(n,r)$ grow proportionally. More precisely, we consider a sequence of matrices $A$ in which the dimensions $(n,r)$ tend to infinity in proportion to each other. We define $\beta\in(0,1]$ to be the limiting value of the ratio $r/n$, that is
$$\beta=\lim_{n\rightarrow\infty}\frac{r}{n}.$$
Given $\beta\in(0,1/2)$, we determine a small constant $\hat{a}(\beta)$ (see later in the section for a quantification) such that the probability that $[\mu(A)]^2$ exceeds $\hat{a}(\beta)\cdot\beta$ vanishes exponentially as $n\rightarrow\infty$. Practically speaking, our asymptotic result will accurately describe the behaviour of random subspaces in large dimensions.

The significance of this result is the way it relates the worst-case block coherence of random subspaces to the fundamental limits derived in Section~\ref{lower_bound}. There it was shown that a lower bound on $[\mu(A)]^2$ is given by $(mr-n)/[n(m-1)]$, which tends to $\beta$ as $(m,n,r)\rightarrow\infty$ proportionally. It follows that, in this asymptotic sense, random subspaces gives worst-case block coherence with optimal order, though the multiple involved is slightly sub-optimal.

We next define our threshold $\hat{a}(\beta)$. Let $\beta\in(0,1/2)$, and let $\hat{a}(\beta)$ be the unique solution in $2\le a<1/\beta$ to the equation
\begin{equation}\label{threshold_def}
\beta\ln a+\left(\frac{1-2\beta}{2}\right)\ln(1-a\beta)-(1-\beta)\ln(1-\beta)=0.
\end{equation}
Equation (\ref{threshold_def}) has a unique solution in $a>2$, since the left-hand side is positive for $a=2$, tends to $-\infty$ as $a\ra 1/\beta$, and is monontonically decreasing on $[2,1/\beta)$. We proceed to our main result for random subspaces, the proof of which can be found in the Appendix. Note that the restriction $\beta\in(0,1/2)$ corresponds to the restriction $n>2r$. The case $n=2r$ is not covered by our analysis, since in that case the proportional-dimensional analysis does not apply (see~\ref{random_proofs}). However it can be shown that $\hat{a}(\beta)\ra 2$ as $\beta\ra 1/2$, and therefore one expects the result $a(1/2)=2$, which would imply that the subspaces have non-trivial intersection with high probability.

\newpage

\begin{theorem}\label{random_result}
Let $r/n\rightarrow\beta\in(0,1/2)$ as $n\rightarrow\infty$ and choose $\e>0$. Let $A$ be a real matrix consisting of random blocks, as defined earlier in this section, where the number of blocks $m$ is polynomial in $n$. Then
$$\PP\left\{[\mu(A)]^2\geq\hat{a}(\beta)\cdot\beta+\e\right\}\ra 0,$$
where $\hat{a}(\beta)$ is defined in (\ref{threshold_def}).
\end{theorem}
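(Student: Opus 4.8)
The plan is to control $\mu(A)^2 = \max_{i \neq j} \lambda_1^{(i,j)}$, the maximum over all $\binom{m}{2}$ pairs of blocks of the largest squared singular value of $A_i^\ast A_j$, via a union bound. Since $m$ is polynomial in $n$, there are at most $\mathrm{poly}(n)$ pairs, so it suffices to show that for a single pair, $\PP\{\lambda_1 \geq \hat a(\beta)\beta + \e\}$ decays faster than any polynomial in $n$ --- in fact we will get exponential decay, which more than suffices. So the crux reduces to a large-deviation estimate for $\lambda_1$, the largest eigenvalue of a single $\mathrm{Beta}_r(r/2, (n-r)/2)$ matrix, using the joint density (\ref{beta_pdf}) from Lemma~\ref{random_pdf}.

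The key steps I would carry out: First, write $\PP\{\lambda_1 \geq t\}$ by integrating the density (\ref{beta_pdf}) over the region $\{1 \geq \lambda_1 \geq t, \lambda_1 \geq \lambda_2 \geq \cdots \geq \lambda_r \geq 0\}$. The Vandermonde-type factor $\prod_{i<j}(\lambda_i - \lambda_j)$ and the product $\prod \lambda_i^{-1/2}(1-\lambda_i)^{(n-2r-1)/2}$ are both bounded on $[0,1]^r$ in a way that only contributes subexponentially (or at worst $\exp(O(r \ln n))$) relative to the main term, so the dominant behaviour comes from the single factor $(1-\lambda_1)^{(n-2r-1)/2}$ raised to the large power, integrated against $\lambda_1^{-1/2}$ near its lower limit. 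More carefully, I would crudely bound the integral over $\lambda_2, \ldots, \lambda_r$ by a constant (or $\exp(O(r\ln n))$) times $\sup$ of the remaining integrand, leaving a one-dimensional integral $\int_t^1 \lambda_1^{-1/2}(1-\lambda_1)^{(n-2r-1)/2}\,d\lambda_1$, together with the normalizing constant $c_{n,r}$. Then I would estimate $c_{n,r}$ and this tail integral asymptotically using Stirling's formula applied to the multivariate gamma functions $\Gm_r(\cdot)$ in (\ref{multi_gamma}); after taking $\frac{1}{n}\ln(\cdot)$ and sending $n \to \infty$ with $r/n \to \beta$, the leading-order exponential rate is exactly the left-hand side of (\ref{threshold_def}) with $a = t/\beta$. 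Setting this rate to be negative for $t > \hat a(\beta)\beta$ uses precisely the stated fact that the left-hand side of (\ref{threshold_def}) is monotonically decreasing on $[2, 1/\beta)$ and vanishes at $a = \hat a(\beta)$; hence for $t = \hat a(\beta)\beta + \e$ the rate is some $-\delta < 0$, giving $\PP\{\lambda_1 \geq t\} \leq \exp(-\delta n(1+o(1)))$.

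Assembling: $\PP\{\mu(A)^2 \geq \hat a(\beta)\beta + \e\} \leq \binom{m}{2} \exp(-\delta n (1+o(1))) \leq n^{C} \exp(-\delta n(1+o(1))) \to 0$, since $m = \mathrm{poly}(n)$. I would also want to remark on why the factors I discarded (the Vandermonde, the $\prod(1-\lambda_i)$ over $i \geq 2$, the $\prod \lambda_i^{-1/2}$ singularities, and the combinatorial volume of the ordered simplex) genuinely do not affect the exponential rate: each is at most $\exp(O(r \ln n)) = \exp(o(n) \cdot \ln n)$... actually one must be slightly careful here since $r \ln n$ is $o(n \ln n)$ but not obviously $o(n)$, so the cleanest route is to track the $r\ln n$ terms explicitly and verify they are dominated, or to absorb the $\lambda_i^{-1/2}$ singularities by a change of variables $\lambda_i = \sin^2\theta_i$ as in~\cite{largest_angle}.

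The main obstacle I anticipate is exactly this bookkeeping: getting the exponential rate in (\ref{threshold_def}) to come out cleanly requires carefully pairing the $\Gm_r$ factors in $c_{n,r}$ against the tail integral and keeping track of which subexponential/$\exp(O(r\ln n))$ terms are truly negligible, so that the limit of $\frac1n \ln \PP\{\lambda_1 \geq t\}$ is the asserted expression and nothing extra. A secondary subtlety is justifying that the largest eigenvalue's tail is controlled by the one-dimensional marginal behaviour rather than being inflated by the interaction terms $(\lambda_1 - \lambda_j)$ --- but since those factors are bounded above by $1$ on $[0,1]^r$, they can only help, so this direction of the inequality is safe; it is only the matching lower bound on the rate (not needed here) that would be delicate.
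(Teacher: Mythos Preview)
Your high-level architecture---a tail bound for $\lambda_1=\|A_i^\ast A_j\|_2^2$ for a single pair, followed by a union bound over the $\binom{m}{2}$ pairs---matches the paper exactly, and your observation that the Vandermonde factors $(\lambda_1-\lambda_j)\le 1$ can only help is in the right spirit. However, the core of your plan for the tail bound has a genuine gap.

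You propose to ``crudely bound'' the integral over $\lambda_2,\ldots,\lambda_r$ by something of size $\exp(O(r\ln n))$, leaving $c_{n,r}$ times a one-dimensional integral, and then to extract the rate by applying Stirling to $c_{n,r}$. This cannot work as stated: in the proportional regime $r\sim\beta n$, one has $\ln c_{n,r}=\Theta(n^2)$ (iterate the paper's identity $c_{n,r}/c_{n-2,r-1}\asymp[B(r/2,(n-r)/2)]^{-2}=\exp(nH(\beta)(1+o(1)))$ over $r\sim\beta n$ steps), so $\tfrac{1}{n}\ln c_{n,r}\to+\infty$. Meanwhile, your crude upper bound on the $(r-1)$-dimensional integral (Vandermonde $\le 1$, $(1-\lambda_i)^{(n-2r-1)/2}\le 1$) gives at most $2^{r-1}/(r-1)!$, whose contribution $-\beta\ln(\beta n)$ is far too small to cancel $\Theta(n)$. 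The resulting upper bound on $\bar F(t)$ is therefore $\gg 1$ and says nothing; the discarded factors are \emph{not} negligible at the $\exp(\Theta(n))$ scale you need. Your own caveat that $r\ln n$ is not $o(n)$ is an understatement---the real mismatch is at order $n^2$.

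The paper resolves this with a single structural observation you are missing: bound only the $\lambda_1$-row of the Vandermonde via $\prod_{j\ge 2}(\lambda_1-\lambda_j)\le\lambda_1^{r-1}$, and \emph{keep} both the remaining Vandermonde $\prod_{2\le i<j}(\lambda_i-\lambda_j)$ and the full product $\prod_{i\ge 2}\lambda_i^{-1/2}(1-\lambda_i)^{(n-2r-1)/2}$. The surviving $(r-1)$-dimensional integral is then recognized exactly as a lower-dimensional multivariate beta density, hence equals $1/c_{n-2,r-1}$. This collapses the huge constant to the ratio $c_{n,r}/c_{n-2,r-1}$, which reduces to ordinary gamma functions and has the finite limit $\tfrac{1}{n}\ln(\cdot)\to H(\beta)$; combined with the incomplete-beta tail this yields precisely the expression in~(\ref{threshold_def}). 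Note also that the bound $(\lambda_1-\lambda_j)\le\lambda_1$ rather than $\le 1$ is not cosmetic: the extra factor $\lambda_1^{r-1}$ contributes $\beta\ln(a\beta)$ to the rate, and without it you would obtain a strictly larger threshold than $\hat a(\beta)$, hence not the theorem as stated.
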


During the final revision of this paper, we became aware of some closely related work in~\cite{bodmann}, which also considers the same random subspace model as in the present paper, and proves that the \textit{Hilbert-Schmidt inner product}, defined to be $\mbox{Tr}[(A_i A_i^{\ast})(A_j A_j^{\ast})]=\|A_i^{\ast}A_j\|_F^2$, asymptotically approaches $r^2/n$ for all $i\neq j$. This quantity can be viewed as a counterpart of the worst-case block coherence considered here, where the spectral distance metric is replaced by chordal distance. Since 
\begin{equation}\label{frob_spec}
\|A_i^{\ast}A_j\|_F^2\le r\cdot\|A_i^{\ast}A_j\|_2^2,
\end{equation}
our result can be used to deduce a similar result for the Hilbert-Schmidt inner product. A significant difference between the respective results is that ours is given in a proportional-dimensional asymptotic framework alone, whereas a result is given in~\cite{bodmann} for finite dimensions (but which also allows asymptotic results to be deduced). It is also interesting that the result in~\cite{bodmann} involves no multiplicative constant: the Hilbert-Schmidt inner product approaches exactly $r^2/n$. This points to the fact that, asymptotically, the inequality (\ref{frob_spec}) is not satisfied as an equality for random subspaces, and that the eigenvalues of $A_i^{\ast}A_j$ do not concentrate around their expected value but rather tend to some fixed distribution. This fits entirely with the existence of a known limiting distribution of the multivariate beta distribution~\cite{wachter}. From this viewpoint, our result complements well the one given in~\cite{bodmann}.

Fig. \ref{fig:empirical} plots the square of the asymptotic upper bound on worst-case block coherence of random subspaces, $\hat{a}(\beta)$, for $\beta\in(0,1/2)$. We see that worst-case block coherence is never more than a small multiple\footnote{As $\beta\ra 0$, $\hat{a}(\beta)$ tends to $\approx 5.357$, which is the solution in $2\le a<1/\beta$ to $a=2(1+\ln a)$, and this upper bounds $\hat{a}(\beta)$ on $\beta\in(0,1/2)$.} of $\sqrt{\beta}\approx\sqrt{r/n}$. Also plotted is the empirical behavior of worst-case coherence for random subspaces: fixing $n=1000$, for varying $r$, $1000$ trials of $(n/r)^2$ random subspaces were generated (mimicking the Kronecker product constructions of Section~\ref{deterministic}), and the worst-case coherence calculated as an average over all trials. We observe that the theoretical upper bound is somewhat tight, and that the expected asymptotic behavior is in evidence even for this modest problem size of $n=1000$. Note that empirical tests take $\beta\geq 0.05$, since large-scale testing is required to obtain high-dimensional limiting behavior for smaller values of $\beta$.

\begin{figure}[h]
\begin{center}
\includegraphics[width = 0.4\textwidth]{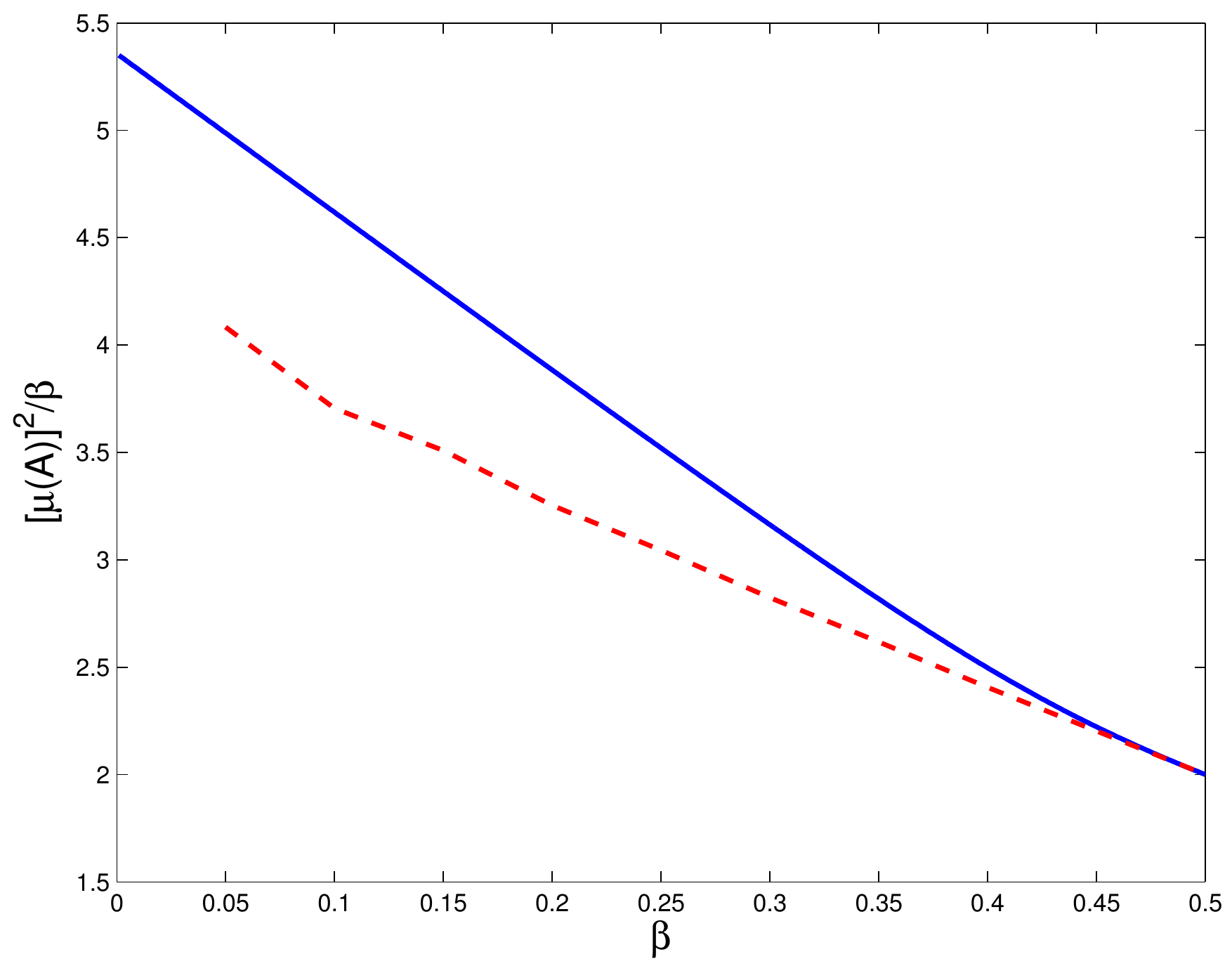}\caption{Worst-case block coherence for random subspaces: theoretical upper bound (unbroken) and the empirical average from $1000$ trials with $n=1000$ and $m=(n/r)^2$ (dashed).}
\label{fig:empirical}
\end{center}
\end{figure}

\section{Average coherence}\label{average}

\subsection{Theoretical results}\label{theoretical}

The constructions which consist of unions of orthobases are also amenable to the analysis of their average block coherence, defined in (\ref{average_def}). Given a construction $A=P\otimes Q$, where $Q$ is unitary, we relate the average block coherence of $A$ to the average column coherence of $P$~\cite{revisiting}, defined to be
\begin{equation}\label{average_col_def}
\nu_1(P):=\frac{1}{m-1}\max_{i}\Big\|\sum_{j\neq i}p_i^{\ast}p_j\Big\|_2,
\end{equation}
where $\{p_i\}$ are the columns of $P$.

\begin{lemma}\label{kron_lemma}
Let $A=P\otimes Q$ where $P\in\CC^{(n/r)\times m}$, and where $Q\in\CC^{r\times r}$ is a unitary matrix. Then
\begin{equation}\label{average_link}
\nu(A)=\nu_1(P).
\end{equation}
\end{lemma}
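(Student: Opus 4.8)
The plan is to mimic the computation already carried out in the proof of Theorem~\ref{kronecker_ETF}, but now tracking the sum $\sum_{j\neq i}A_i^{\ast}A_j$ rather than a single product $A_i^{\ast}A_j$. First I would fix an index $i$ and write, using the block structure $A_i = \begin{bmatrix} P_{1i}Q & P_{2i}Q & \cdots \end{bmatrix}^{\top}$ stacked appropriately (equivalently $A_i$ is the $n\times r$ matrix whose $s$-th $r\times r$ block is $P_{si}Q$), exactly as in~(\ref{matrix_mult}),
\begin{equation*}
A_i^{\ast}A_j=\sum_{s=1}^{n/r}(P_{si}Q)^{\ast}(P_{sj}Q)=\langle p_i,p_j\rangle\, Q^{\ast}Q=\langle p_i,p_j\rangle\, I_r,
\end{equation*}
where the last step uses that $Q$ is unitary. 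Summing over $j\neq i$ and pulling the scalars out gives
\begin{equation*}
\sum_{j\neq i}A_i^{\ast}A_j=\Big(\sum_{j\neq i}\langle p_i,p_j\rangle\Big) I_r.
\end{equation*}

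Next I would take spectral norms. Since the right-hand side is a scalar multiple of the $r\times r$ identity, its spectral norm is simply $\big|\sum_{j\neq i}\langle p_i,p_j\rangle\big|$, which is exactly $\big\|\sum_{j\neq i}p_i^{\ast}p_j\big\|_2$ (the spectral norm of a $1\times 1$ matrix). Taking the maximum over $i$ and dividing by $m-1$ then yields $\nu(A)=\nu_1(P)$ by comparing with the definitions~(\ref{average_def}) and~(\ref{average_col_def}).

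The only subtlety worth flagging is bookkeeping about what ``block'' means: one must be careful that the $i$-th block $A_i$ of $A=P\otimes Q$ really does correspond to the $i$-th column $p_i$ of $P$ tensored with all of $Q$, i.e. $A_i = p_i\otimes Q$, so that the index sets in the two maxima line up. This is immediate from the definition of the Kronecker product together with the fact that each block has $r$ columns and $Q\in\CC^{r\times r}$, but it is the one place where a misalignment could creep in. Beyond that, there is no real obstacle here — the identity $A_i^{\ast}A_j=\langle p_i,p_j\rangle I_r$ does all the work, and unitarity of $Q$ is exactly what makes the spectral norm collapse to the absolute value of a scalar so that no cross terms between different $Q$-columns survive.
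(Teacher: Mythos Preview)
Your proposal is correct and follows essentially the same approach as the paper's own proof: both invoke the computation from Theorem~\ref{kronecker_ETF} to obtain $A_i^{\ast}A_j=\langle p_i,p_j\rangle\,Q^{\ast}Q$, then sum over $j\neq i$, take spectral norms, and compare with the definitions~(\ref{average_def}) and~(\ref{average_col_def}). Your version is slightly more explicit in writing $Q^{\ast}Q=I_r$ and in flagging the block-indexing convention, but the substance is identical.
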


\begin{proof}
Adopting the notation in the proof of Theorem~\ref{kronecker_ETF}, we may follow the argument in the proof of Theorem~\ref{kronecker_ETF} to deduce (\ref{matrix_mult}). Then, using (\ref{average_def}) and (\ref{average_col_def}), we have
$$\nu(A)=\frac{1}{m-1}\max_{i}\Big\|\sum_{j\neq i}A_i^{\ast}A_j\Big\|_2 =\frac{1}{m-1}\max_{i}\Big|\sum_{j\neq i}p_i^{\ast}p_j\Big|=\nu_1(P),$$
which proves (\ref{average_link}).
\end{proof}

Using results from~\cite{revisiting} in combination with Lemma~\ref{kron_lemma}, we obtain specific results for $P$ belonging to several families of frames, which are displayed in Table~\ref{coherence_table}. See~\cite{revisiting} for more details on how these frames are constructed.

\begin{table}
\caption{Worst-case and average block coherence for $A=P\otimes Q$, for $Q$ unitary and $P$ belonging to various families of frames.}
\centering
\begin{tabular}{|c||c|c|}
\hline Frame family&$\mu(A)$&$\nu(A)$\\ \hline \hline
Alltop Gabor&$\frac{1}{\sqrt{n}}$&$\frac{1}{n+1}$\\ \hline
Discrete chirp&$\frac{1}{\sqrt{n}}$&$\frac{m-n}{n(m-1)}$\\ \hline
Kerdock&$\frac{1}{\sqrt{n}}$&$\frac{1}{m-1}$\\ \hline
Dual BCH&$\sqrt{\frac{2}{n}}$&$\frac{m-n}{n(m-1)}$\\ \hline
\end{tabular}
\label{coherence_table}
\end{table}


\subsection{Flipping algorithm}\label{flipping_section}

In the following we will present an algorithm that can improve the average coherence by random rotations, while preserving the worst-case coherence, since $\|X\|_2 = \| -X \|_2$ for a matrix $X$. This algorithm is inspired by a similar flipping algorithm for columns \cite{fundamental}.

\begin{theorem}\label{existence}
Let $\{A_i\}_{i=1}^m$ be a set of subspaces with $A_{i}\in \mathbb{R}^{n\times r}$, where $a_i = \{1, -1\}$ are equally likely and mutually independent. If $c > 16\sqrt{e}\sqrt{\frac{n}{r(m-1)}}$, then there exists a set of flipped subspaces $\{a_i A_i\}$ with the same worst-case coherence and satisfies $\nu \leq c \mu \sqrt{\frac{r\log m}{n}}$.
\end{theorem}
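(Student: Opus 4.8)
The plan is to use the probabilistic method: show that if we draw the signs $a_i$ independently and uniformly from $\{1,-1\}$, then the expected failure probability is strictly less than $1$, so a good sign pattern must exist. Since flipping the sign of a block $A_i$ only changes $A_i^\ast A_j$ to $-A_i^\ast A_j$ (or $a_i a_j A_i^\ast A_j$), the worst-case block coherence $\mu$ is unaffected, so the only quantity to control is the average block coherence $\nu$. Concretely, for each fixed $i$, the relevant matrix is $Z_i := \sum_{j\neq i} a_i a_j A_i^\ast A_j = a_i \sum_{j\neq i} a_j A_i^\ast A_j$, and I need to bound $\|Z_i\|_2$ uniformly over $i$ with high probability. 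Because $\|a_i \cdot X\|_2 = \|X\|_2$, it suffices to bound $\|\sum_{j\neq i} a_j A_i^\ast A_j\|_2$ for each $i$.

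The key step is a matrix concentration inequality — a matrix Bernstein / matrix Hoeffding bound for sums of independent random matrices (the matrix analogue of the scalar Hoeffding bound used in the column flipping algorithm of~\cite{fundamental}). Writing $X_j := a_j A_i^\ast A_j$ for $j \neq i$, these are independent, mean-zero, $r\times r$ matrices with $\|X_j\|_2 = \|A_i^\ast A_j\|_2 \le \mu$. The variance proxy is controlled by $\max\{\|\sum_{j\neq i} \Expect[X_j X_j^\ast]\|_2,\ \|\sum_{j\neq i}\Expect[X_j^\ast X_j]\|_2\}$; since $\Expect[X_j X_j^\ast] = A_i^\ast A_j A_j^\ast A_i$ and each summand has spectral norm at most $\mu^2$, this is at most $(m-1)\mu^2$. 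The matrix Bernstein bound then gives, for a threshold $t$,
$$
\PP\Big\{\Big\|\sum_{j\neq i} a_j A_i^\ast A_j\Big\|_2 \ge t\Big\} \le 2r\exp\!\left(\frac{-t^2/2}{(m-1)\mu^2 + \mu t/3}\right).
$$
Choosing $t$ of order $\mu\sqrt{(m-1)\log m}$ (more precisely $t = c'\mu\sqrt{(m-1)\log m}$ for a suitable absolute constant $c'$ traceable to the constant $16\sqrt{e}$ in the statement) makes the right-hand side smaller than $1/m$. A union bound over the $m$ choices of $i$ then shows that with probability bounded away from $1$ (in fact tending to $0$), $\max_i \|Z_i\|_2 \le t$, hence a good sign vector exists. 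Finally, dividing through by $m-1$ and recalling $\nu(aA) = \frac{1}{m-1}\max_i \|Z_i\|_2$, one gets $\nu \le \frac{c'\mu\sqrt{(m-1)\log m}}{m-1} = c'\mu\sqrt{\frac{\log m}{m-1}}$; rewriting $\sqrt{\frac{\log m}{m-1}}$ in the form $\sqrt{\frac{r\log m}{n}}\cdot\sqrt{\frac{n}{r(m-1)}}$ and absorbing the factor $\sqrt{\frac{n}{r(m-1)}}$ into the hypothesis $c > 16\sqrt{e}\sqrt{\frac{n}{r(m-1)}}$ recovers exactly the claimed bound $\nu \le c\mu\sqrt{\frac{r\log m}{n}}$.

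The main obstacle is getting the constants to line up: the statement pins down a specific constant $16\sqrt{e}$, which means I need a concrete, non-asymptotic matrix tail bound with explicit constants rather than an order-of-magnitude statement. The cleanest route is probably the Ahlswede–Winter / matrix Laplace transform method — bound $\Expect\,\trace\exp(\theta \sum_j X_j)$ using Golden–Thompson and the operator Jensen inequality, optimize over $\theta$ — since that gives fully explicit constants and naturally produces the $\sqrt{e}$ factor (from the step $\Expect[e^{\theta X_j}] \preceq \exp(\text{something})$ where one crudely bounds $e^{x} \le 1 + x + x^2$ or uses $\cosh$). A secondary, more routine point is handling the possibility that $A_i^\ast A_i \neq I$ is not an issue here because the $j\neq i$ restriction excludes the diagonal block, and the within-block orthonormality assumption is what guarantees $\|A_i^\ast A_j\|_2 \le \mu$ is the right per-term bound. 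I would also double-check the edge behavior when $m-1$ is not much larger than $\log m$, but the hypothesis on $c$ effectively rules out the degenerate regime.
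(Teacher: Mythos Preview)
Your overall strategy---probabilistic method, per-$i$ concentration for the Rademacher sum $\sum_{j\neq i}a_j A_i^{\ast}A_j$, then a union bound over $i$---matches the paper's exactly. The genuine difference is the concentration tool: the paper treats the partial sums as a martingale in the Banach space $(\RR^{r\times r},\|\cdot\|_2)$ and applies a Banach-space-valued Azuma inequality, obtaining
\[
\PP\Big\{\Big\|\sum_{j\neq i}a_i a_j A_i^{\ast}A_j\Big\|_2\ge\delta\Big\}\le e^{2}\exp\!\Big\{-\tfrac{c_0\,\delta^{2}}{(m-1)\mu^{2}}\Big\},\qquad c_0=\tfrac{e^{-1}}{256},
\]
whereas you invoke matrix Bernstein/Hoeffding. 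Both routes are valid; yours exploits the matrix structure (variance proxy $\sum_j A_i^{\ast}A_j A_j^{\ast}A_i$) and typically gives a sharper exponent, at the price of a dimension-dependent prefactor $2r$ and hence a $\log(rm)$ rather than $\log m$ inside the final bound. The paper's Banach-space Azuma has a dimension-free prefactor $e^{2}$, which is what keeps $\log m$ clean.

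One point to correct: the specific constant $16\sqrt{e}$ is not something you should expect to recover from the Ahlswede--Winter/matrix Laplace route. It comes directly from the Banach-space Azuma constant $c_0=e^{-1}/256$: the condition $c_0 c^{2}(m-1)\tfrac{r}{n}>1$ rearranges to $c>16\sqrt{e}\sqrt{\tfrac{n}{r(m-1)}}$. Your approach would prove the theorem with a different explicit constant (and an extra harmless $\log r$), so if the goal is to reproduce the statement verbatim, you would need to swap in the Banach-space Azuma inequality; if the goal is merely a correct existence result of the stated form, your argument is fine and arguably cleaner.
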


\begin{proof}
Note that the average coherence of the wiggled subspaces is given by
\begin{align*}
\nu&= \frac{1}{m-1}\max_{i} \Big\|\sum_{j\neq i}  a_i a_j A_i^*  A_j\Big\|_2.
\end{align*}
Now consider for fixed $i = 1$,
\[
\Big\|\sum_{j\neq i} a_i a_j A_i ^* A_j  \Big\|_2
= \Big\|\sum_{j=2}^m a_i a_j A_i^* A_j \Big\|_2
\]
We can construct a sequence $M_k = \sum_{j=1}^{k+1} a_i a_j A_i^* A_j $, and let $M_0 = 0$. The sequence $M_k$, $k = 0, 1, \ldots, m-1$ forms a martingale, since for $k = 1, \ldots, m-1$
\begin{align*}
&&\mathbb{E} \{M_{k} - M_{k-1} \} =
\mathbb{E} \{a_1 a_{k+1} A_1^* A_{k+1}\} \\
&&=
\mathbb{E} \{a_1\} \mathbb{E} \{a_{k+1}\} A_1^* A_{k+1}  = 0,
\end{align*}
where we have used the fact that $a_i$'s are mutually independent.
The martingale difference is  bounded
\begin{align*}
&\|M_{k} - M_{k-1}\|_2 =
\|a_1 a_{k+1}  A_1^* A_{k+1}\|_2 \\
&= \|A_1^* A_{k+1}\|_2 \leq \mu, \quad \forall k = 0, \ldots, m-1.
\end{align*}

Using the Banach-space-valued Azuma's inequality, we obtain the following
\[
\mathbb{P} \left\{
\|M_k - M_0\|_2 \geq \delta \right\}
\leq e^2 \exp\Big\{-\frac{c_0 \delta^2}{(m-1)\mu^2}\Big\}
\]
where $c_0 \triangleq \frac{e^{-1}}{256}$.
Note that
\[
\frac{1}{m-1}\|M_k - M_0\|_2 = \frac{1}{m-1}\Big\|
\sum_{j=2}^m A_1^* A_j\Big\|_2.
\]
Hence taking a union bound with respect to $i$:
\begin{align*}
&\mathbb{P} \Big\{
\frac{1}{m-1}\max_{i}\Big\|\sum_{j\neq i}  a_i a_j A_i^*  A_j\Big\|_2
\geq \nu \Big\}\\
&= \mathbb{P} \Big\{
\max_{i}\Big\|\sum_{j\neq i}  A_i^* A_j\Big\|_2
\geq (m-1)\nu \Big\}\\
&\leq e^2 m \exp\Big\{-\frac{c_0 \nu^2(m-1)}{\mu^2}\Big\}
\end{align*}

Set
\[
\nu \leq c \mu \sqrt{\frac{r\log m}{n}},
\]
then
\begin{align}
&\mathbb{P} \Big\{
\frac{1}{m-1}\max_{i}\Big\|\sum_{j\neq i}  a_i a_j A_i^*  A_j\Big\|_2
\geq \nu \Big\}\nonumber\\
&\leq   \exp\{ 2+
(1 - c_0 c^2 (m-1) \frac{r}{n}) \log m
\} \label{align}
\end{align}
where $c_0 = e^{-1}/256$. Hence
when $c > 16\sqrt{e}\sqrt{\frac{n}{r(m-1)}}$ (note that $n\ll r(m-1)$), $1 - c_0 c^2 (m-1) \frac{r}{n} < 0$. When $\frac{mr}{n}$ is sufficiently large and $m$ is sufficiently large, the right hand side of (\ref{align}) is less than 1.
Under such conditions,
\[
\mathbb{P} \Big\{
\frac{1}{m-1}\max_{i}\Big\|\sum_{j\neq i}  a_i a_j A_i^*  A_j\Big\|_2
\leq \nu
\Big\} > 0,
\]
which proves the theorem.
\end{proof}


\begin{algorithm}
\caption{Flipping algorithm}
\begin{algorithmic}[1]
\STATE Input: $A = \begin{bmatrix} A_1 & \cdots & A_m \end{bmatrix}$
\STATE Output: $B = \begin{bmatrix} B_1 & \cdots & B_m \end{bmatrix}$
\STATE $B_1 = A_1$, $F_1 = B_1$
\FOR{$k=1 \to m-1$}
\IF{$\|F_k + A_{k+1}\|_2 \leq \|F_k - A_{k+1}\|_2$}
\STATE $B_{k+1} = A_{k+1}$
\ELSE
\STATE $B_{k+1} = -A_{k+1}$
\ENDIF
\STATE $F_{k+1} = F_k + B_{k+1}$
\ENDFOR
\end{algorithmic}\label{flipping}
\end{algorithm}

In \cite{one_step}, it is established that for the one-step group thresholding algorithm to perform well, the average block coherence $\nu$ has to be sufficiently small: $\nu \leq c \mu \sqrt{\frac{r\log m}{n}}$. In the following theorem, we demonstrate that the flipping algorithm can produce a set of subspaces $\{A_i\}_{i=1}^m$ that satisfies the condition on block coherence.
\begin{theorem}\label{thm2}
Let $\{A_i\}_{i=1}^m$, $A_i\in \mathbb{R}^{n\times r}$ be a set of subspaces. Suppose \[
\frac{m-1}{m-n/r} \cdot\frac{1}{\log m}\cdot \frac{\sqrt{m}+1}{m-1}
\leq c^2 (\frac{r}{n})^2\]
for some constant $c>0$,
then the flipping algorithm in Algorithm \ref{flipping} outputs a set of subspaces $\{B_i\}_{i=1}^m$ with the same worst-case block coherence $\mu$ and the average block coherence satisfies $\nu \leq c \mu \sqrt{\frac{r\log m}{n}}$.
\end{theorem}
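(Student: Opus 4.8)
The plan is to isolate the single scalar quantity that the flipping algorithm actually controls --- the spectral norm of the cumulative sum $F_m=\sum_{j=1}^{m}B_j$ --- to bound the average block coherence of the output in terms of it, then to bound $\|F_m\|_2$ itself from the greedy update, and finally to invoke the universal lower bound of Theorem~\ref{block_welch} for $\mu$ to rearrange everything into the stated hypothesis. The assertion that the worst-case block coherence is unchanged is immediate: each $B_i=a_iA_i$ with $a_i\in\{+1,-1\}$, so $\|B_i^{\ast}B_j\|_2=\|A_i^{\ast}A_j\|_2$ for all $i\neq j$, whence $\mu(B)=\mu(A)=\mu$.

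First I would rewrite the average block coherence of the output. Since each $B_i$ has orthonormal columns, $B_i^{\ast}B_i=I_r$, so for every $i$
$$\sum_{j\neq i}B_i^{\ast}B_j \;=\; B_i^{\ast}\Big(\sum_{j=1}^{m}B_j\Big)-I_r \;=\; B_i^{\ast}(F_m-B_i);$$
since $\|B_i^{\ast}\|_2=1$, this yields $\|\sum_{j\neq i}B_i^{\ast}B_j\|_2\le\|F_m-B_i\|_2\le\|F_m\|_2+1$, and taking the maximum over $i$ gives $\nu(B)\le(\|F_m\|_2+1)/(m-1)$.

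Next I would bound $\|F_m\|_2$ using the algorithm. At step $k$ the rule picks the sign that makes $\|F_{k+1}\|_2$ smallest, so $\|F_{k+1}\|_2^2\le\tfrac12\big(\|F_k+A_{k+1}\|_2^2+\|F_k-A_{k+1}\|_2^2\big)$, and a parallelogram-type estimate would bound the right-hand side by $\|F_k\|_2^2+\|A_{k+1}\|_2^2=\|F_k\|_2^2+1$, using that $A_{k+1}$ has orthonormal columns so $\|A_{k+1}\|_2=1$. Since $\|F_1\|_2=\|A_1\|_2=1$, induction over $k=1,\dots,m-1$ would then give $\|F_m\|_2\le\sqrt m$, hence $\nu(B)\le(\sqrt m+1)/(m-1)$. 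Finally, Theorem~\ref{block_welch} gives $\mu^2\ge\frac{mr-n}{n(m-1)}=\frac{r(m-n/r)}{n(m-1)}$, so that $c\mu\sqrt{r\log m/n}\ge\frac{c}{n}\sqrt{\frac{r^2(m-n/r)\log m}{m-1}}$; squaring the target $\nu(B)\le c\mu\sqrt{r\log m/n}$, clearing a common factor $(m-1)$, and using $(\sqrt m+1)^2\le(\sqrt m+1)(m-1)$ (valid once $m\ge 4$) reduces the target to $\frac{\sqrt m+1}{(m-n/r)\log m}\le c^2(r/n)^2$, which is exactly the hypothesis $\frac{m-1}{m-n/r}\cdot\frac{1}{\log m}\cdot\frac{\sqrt m+1}{m-1}\le c^2(r/n)^2$.

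The hard part is the bound $\|F_m\|_2\le\sqrt m$. The parallelogram identity $\|X+Y\|^2+\|X-Y\|^2=2\|X\|^2+2\|Y\|^2$ holds for the Frobenius norm but not for the spectral norm, which is the norm appearing both in the flipping rule and in the definitions of $\mu$ and $\nu$; running the argument with the Frobenius norm instead only gives $\|F_m\|_F\le\sqrt{mr}$, with $\|A_{k+1}\|_F=\sqrt r$ in place of $1$, which is too lossy to recover the stated constant. So the delicate point is to show that the greedy spectral update nonetheless keeps $\|F_m\|_2$ at the level $\sqrt m$ --- for instance by exploiting incoherence of the blocks, noting that the obstructions to a spectral parallelogram inequality involve pairs of blocks whose coherence is close to $1$ --- or else to recast the flipping rule in a form for which a parallelogram bound applies directly. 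The remaining steps are routine rearrangements of the same flavour as those in the proofs of Theorem~\ref{block_welch} and Lemma~\ref{kron_lemma}.
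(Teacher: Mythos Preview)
Your approach is identical to the paper's: reduce to $\nu(B)\le(\sqrt m+1)/(m-1)$ via an inductive bound $\|F_k\|_2^2\le k$ on the greedy partial sums, and then combine this with the lower bound on $\mu$ from Theorem~\ref{block_welch}. The paper packages the first part as its Lemma~\ref{lemma2} and the combination as the short proof of Theorem~\ref{thm2}; your derivation of $\nu(B)\le(\|F_m\|_2+1)/(m-1)$ and your final rearrangement match the paper's line for line (indeed you are slightly more careful, inserting the inequality $(\sqrt m+1)^2\le(\sqrt m+1)(m-1)$ where the paper writes ``equivalently'').

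The step you flag as ``the hard part'' is precisely where the paper's own argument is flawed. In its proof of Lemma~\ref{lemma2}, the paper simply asserts the parallelogram \emph{identity} $2\|X\|_2^2+2\|Y\|_2^2=\|X+Y\|_2^2+\|X-Y\|_2^2$ for the spectral norm and deduces $\|F_{k+1}\|_2^2\le\|F_k\|_2^2+1$ from it. As you correctly observe, this identity fails for the operator norm, and in fact even the weaker inequality the induction actually needs,
\[
\min\bigl(\|F_k+A_{k+1}\|_2^2,\ \|F_k-A_{k+1}\|_2^2\bigr)\ \le\ \|F_k\|_2^2+1,
\]
can fail: with $n=4$, $r=2$, $A_1=\bigl(\begin{smallmatrix}I_2\\0\end{smallmatrix}\bigr)$ and $A_2=\bigl(\begin{smallmatrix}\mathrm{diag}(1,-1)\\0\end{smallmatrix}\bigr)$, both $\|F_1\pm A_2\|_2=2$, so $\|F_2\|_2=2>\sqrt 2$ and the inductive hypothesis already breaks at $k=2$. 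So you have reproduced the paper's proof faithfully and located a genuine error in it; your Frobenius-norm remark ($\|F_m\|_F\le\sqrt{mr}$, which \emph{does} follow from the true parallelogram law) is the honest version of the inductive step, at the cost of the extra factor $\sqrt r$ you note.
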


\begin{proof}
Lemma \ref{block_welch} sets a lower-bound on the worse-case block coherence. Combine this with Lemma \ref{lemma2}, which says that the flipping algorithm produces a set of subspaces with $\nu \leq \frac{\sqrt{m}+1}{m-1}$. Hence, to have $\nu \leq c \mu \sqrt{\frac{r\log m}{n}}$ for $\{B_i\}$, it suffices to have
$$\frac{\sqrt{m}+1}{m-1} \leq c \sqrt{\frac{mr-n}{n(m-1)}}\cdot \sqrt{\frac{r\log m}{n}},$$
or equivalently
\[
\frac{m-1}{m-n/r} \cdot\frac{1}{\log m}\cdot \frac{\sqrt{m}+1}{m-1}
\leq c^2 (\frac{r}{n})^2,
\]
for the constant $c$.
\end{proof}

%

The following lemma finds an upper bound on the average block coherence for  the set of subspaces produced by the flipping algorithm:
\begin{lemma}\label{lemma2}
The flipping algorithm produces $\{B_k\}$ whose average coherence satisfies $\nu \leq \frac{\sqrt{m}+1}{m-1}$.
\end{lemma}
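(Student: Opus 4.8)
The plan is to analyze the deterministic greedy flipping algorithm (Algorithm \ref{flipping}) directly, tracking the growth of the partial-sum operator $F_k = \sum_{j=1}^k B_j$. The key observation is the standard one underlying column-flipping arguments: at each step the algorithm chooses the sign of $B_{k+1}$ so that $\|F_{k+1}\|_2$ is as small as possible, and since $\|F_k + A_{k+1}\|_2^2 + \|F_k - A_{k+1}\|_2^2 = 2\|F_k\|_2^2 + 2\|A_{k+1}\|_2^2$ in any inner-product space (here on matrices with the Frobenius inner product, or more carefully with the spectral norm via a fixed test vector), the smaller of the two is at most $\|F_k\|_2^2 + \|A_{k+1}\|_2^2$. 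The first step, then, is to make this parallelogram-type inequality rigorous for the spectral norm — one can fix the unit vector $v$ achieving $\|F_{k+1}\|_2 = \|F_{k+1}v\|_2$ and apply the genuine Hilbert-space parallelogram law to the vectors $F_k v$ and $A_{k+1} v$, using $\|A_{k+1}v\|_2 \le \|A_{k+1}\|_2 \le 1$ since the blocks have orthonormal columns. This yields $\|F_{k+1}\|_2^2 \le \|F_k\|_2^2 + 1$.

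Second, I would iterate this bound from $F_1 = B_1 = A_1$ (so $\|F_1\|_2 \le 1$) to conclude $\|F_k\|_2^2 \le k$, hence $\|F_m\|_2 \le \sqrt{m}$. Third, I need to pass from the single partial sum $F_m = \sum_{j=1}^m B_j$ to the quantity actually appearing in $\nu(B)$, namely $\max_i \|\sum_{j\neq i} B_i^* B_j\|_2$. Write $\sum_{j \neq i} B_i^* B_j = B_i^*\big(\sum_{j=1}^m B_j\big) - B_i^* B_i = B_i^* F_m - I$, using that the columns within block $i$ are orthonormal so $B_i^* B_i = I$. Then by the triangle inequality and $\|B_i^*\|_2 \le 1$,
$$\Big\|\sum_{j\neq i} B_i^* B_j\Big\|_2 \le \|B_i^* F_m\|_2 + \|I\|_2 \le \|F_m\|_2 + 1 \le \sqrt{m} + 1.$$
Dividing by $m-1$ gives $\nu(B) \le (\sqrt{m}+1)/(m-1)$, as claimed.

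The main obstacle is the first step: the parallelogram identity is exactly true for norms induced by an inner product but only holds as the stated inequality for the spectral norm, so some care is needed to justify $\|F_{k+1}\|_2^2 \le \|F_k\|_2^2 + 1$ rather than a weaker bound with an extra cross term. The clean fix is the test-vector argument above — choose $v$ with $\|v\|_2 = 1$ and $\|F_{k+1} v\|_2 = \|F_{k+1}\|_2$, note that the algorithm's choice ensures $\|F_{k+1}\|_2 \le \min\{\|F_k + A_{k+1}\|_2, \|F_k - A_{k+1}\|_2\}$, hence in particular $\|F_{k+1} v\|_2^2 \le \tfrac12\|(F_k+A_{k+1})v\|_2^2 + \tfrac12\|(F_k - A_{k+1})v\|_2^2 = \|F_k v\|_2^2 + \|A_{k+1} v\|_2^2 \le \|F_k\|_2^2 + 1$ — and everything else is routine. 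A secondary point worth checking is that the algorithm as written only compares the two candidates for $F_{k+1}$, which is precisely what is needed, and that the base case $k=1$ is handled correctly by the initialization $B_1 = A_1$.
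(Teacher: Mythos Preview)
Your overall plan matches the paper's proof exactly: bound $\|F_m\|_2$ by $\sqrt{m}$ via the inductive claim $\|F_k\|_2^2 \le k$, then pass to $\nu$ via $\sum_{j\ne i} B_i^{\ast} B_j = B_i^{\ast} F_m - I$. You are also right to flag the parallelogram step as the delicate point --- the paper simply invokes the identity $2\|X\|_2^2 + 2\|Y\|_2^2 = \|X+Y\|_2^2 + \|X-Y\|_2^2$ for the spectral norm, which is false in general.

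However, your proposed test-vector repair does not close the gap. The step ``hence in particular $\|F_{k+1} v\|_2^2 \le \tfrac12\|(F_k+A_{k+1})v\|_2^2 + \tfrac12\|(F_k - A_{k+1})v\|_2^2$'' is unjustified: the algorithm chooses the sign minimizing the \emph{operator} norm of $F_{k+1}$, which says nothing about which of $(F_k \pm A_{k+1})v$ is smaller at your particular $v$. Concretely, take $n=4$, $r=2$, $A_1 = [\,e_1\;\; e_2\,]$ and $A_2 = [\,e_1\;\; -e_2\,]$ (both with orthonormal columns). Then $\|F_1 + A_2\|_2 = \|F_1 - A_2\|_2 = 2$, so (say) $F_2 = F_1 + A_2 = [\,2e_1\;\; 0\,]$, and with $v = (1,0)^{\top}$ one gets $\|F_2 v\|^2 = 4$ while $\tfrac12\|(F_1+A_2)v\|^2 + \tfrac12\|(F_1-A_2)v\|^2 = \tfrac12(4+0) = 2$. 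In fact $\|F_2\|_2^2 = 4 > 2 = \|F_1\|_2^2 + 1$, so the inductive claim $\|F_k\|_2^2 \le k$ itself fails here, not merely this particular argument for it. The same example breaks the paper's direct appeal to the parallelogram law (there $2\cdot 1 + 2\cdot 1 = 4 \ne 8 = 4+4$). So while your route is precisely the paper's route, the key spectral-norm inequality needs a genuinely different justification --- or a weaker inductive hypothesis --- than either version supplies.
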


\begin{proof}
Note that
\begin{align*}
\Big\|\sum_{j\neq i} A_i^* A_j \Big\|_2&
= \Big\|\sum_{j=1}^m A_i^* A_j - A_i^* A_i \Big\|_2 \\
&\leq \Big\|\sum_{j=1}^m A_i^* A_j\Big\|_2 + \|A_i^* A_i\|_2  = \Big\|\sum_{j=1}^m A_i^* A_j\Big\|_2 + 1 \\
& \leq \|A_i\|_2 \Big\|\sum_{j=1}^m A_j\Big\|_2 + 1 =\Big\|\sum_{j=1}^m A_j\Big\|_2 + 1.
\end{align*}
where we have used the triangle inequality of the matrix norm $\|A + B\|_2 \leq \|A\|_2+ \|B\|_2$, the inequality $\|AB\|_2\leq \|A\|_2\|B\|_2$, $A_i^* A_j = I_r$, and $\|A_i\|_2 = 1$.

Next, we prove the result by induction. First, $\|F_1\|_2 = \|A_1\|_2 = 1$. Next assume that $\|F_k\|_2 = k$. Use the parallelogram law
$
2\|X\|_2^2 + 2\|Y\|_2^2 =\|X+Y\|_2^2 + \|X-Y\|_2^2
$
for matrices $X$ and $Y$, we have that
\begin{align*}
\|F_k\|_2^2 =& - \|A_{k+1}\|_2^2 + \frac{1}{2}(\|F_k + A_{k+1}\|_2^2 + \|F_k - A_{k+1}\|_2^2) \\
\geq & - \|A_{k+1}\|_2^2 +\frac{1}{2} \cdot 2 \|F_k + B_{k+1}\|_2^2 \\
=& -1 + \|F_{k+1}\|_2^2.
\end{align*}
Hence $\|F_{k+1}\|_2^2\leq 1 + \|F_k\|_2^2 \leq k+1$, by induction. Hence $\|F_m\|_2^2\leq m$. By definition of $F_m$, this means that $\frac{1}{m-1}\|\sum_{i=1}^m B_i\|_2\leq \frac{\sqrt{m}+1}{m-1}$.
\end{proof}

\section{Numerical examples}\label{numerics}

In this section, we present numerical examples, to compare random frames formed by random subspaces, with the deterministic frames formed as a Kronecker product of a matrix with an orthogonal matrix.

In Fig. \ref{performance_comp}, we compare the performance of the one-step block thresholding algorithm for support recovery, comparing deterministic frames with those formed from the union of random subspaces. The performance metric is the non-discovery proportion (NDP), which is the fraction of entries in the original support of the signal that are not present in the recovered support. In Fig. \ref{performance_comp}a, the deterministic frame is constructed as  $A = P \otimes Q$, where $P$ is an equiangular tight frame \cite{complex}, and $Q$ is a Hadamard matrix. The size of the frame is $n = 12$, the number of blocks $m = 16$, and the size of the blocks is $r = 2$. The random subspaces are formed by first generating a Gaussian matrix $A \in\mathbb{R}^{n\times mr}$, performing eigendecomposition for each block $A_i \in \mathbb{R}^{n\times 2}$ to find the left singular vectors $U_i$, and finally forming the block matrix $\begin{bmatrix} U_1 & \cdots & U_m\end{bmatrix}$. The different lines in the same block correspond to different dynamic ranges for amplitudes of non-zero elements. Note that the deterministic construction has much better performance than that of the union of random subspaces. Also note the NDP does not depend on dynamic ranges, which is consistent with the result in \cite{one_step}.

In Fig. \ref{performance_comp}b, we consider larger frames with $n = 128$, $m = 2048$, and $r = 2$.  We compare a deterministic construction with $A = P \otimes Q$, where $P$ is a real Kerdock matrix, and $Q$ is a Hadamard matrix. Using a Kerdock matrix for $P$ allows us to pack more blocks in forming a frame, while still obtaining a low coherence property. The random subspaces are constructed from random Gaussian matrices as above. Similarly, the deterministic construction has better performance than the random subspaces, and the performance does not depend on the dynamic range of the non-zero elements. Fig. \ref{Fig:d} demonstrates $\|A_i^*A_j\|_2$ for $i, j = 1, 2, \ldots, 128$ for the deterministic frame and the random subspaces, as well as their worst-case block coherence $\mu$ and average block coherence $\nu$. Note that the deterministic frame has much lower coherence properties and the magnitudes of $\|A_i^* A_j\|_2$ have a structure: the magnitude of $\|A_i^* A_j\|_2$, $i\neq j$ takes only 2 values $\{0,\sqrt{r/n}\}$.

\begin{figure}[h]
\begin{center}
\subfloat[ETF]{\includegraphics[width = 0.4\textwidth]{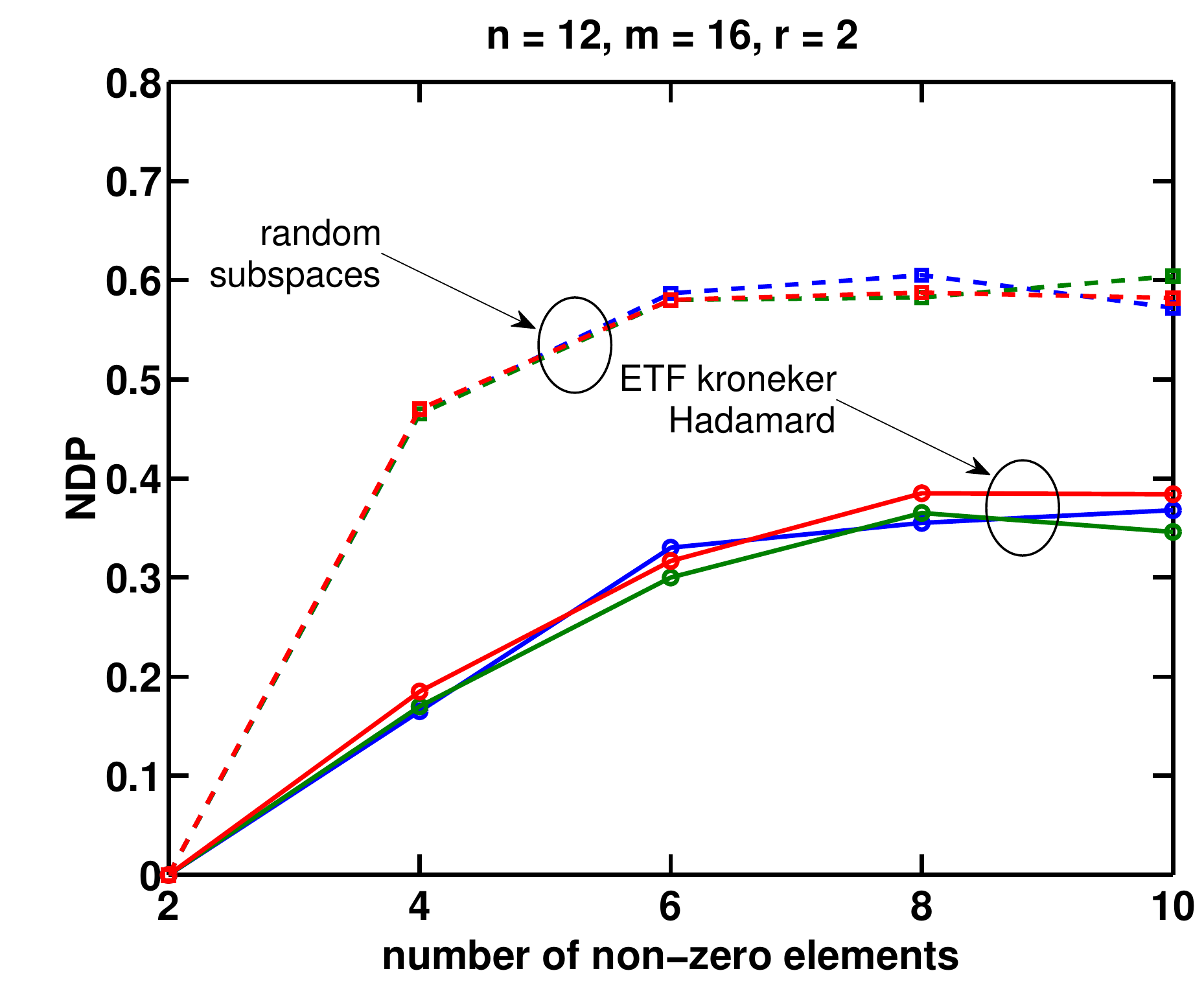}}
\subfloat[Kerdock]{\includegraphics[width = 0.4\textwidth]{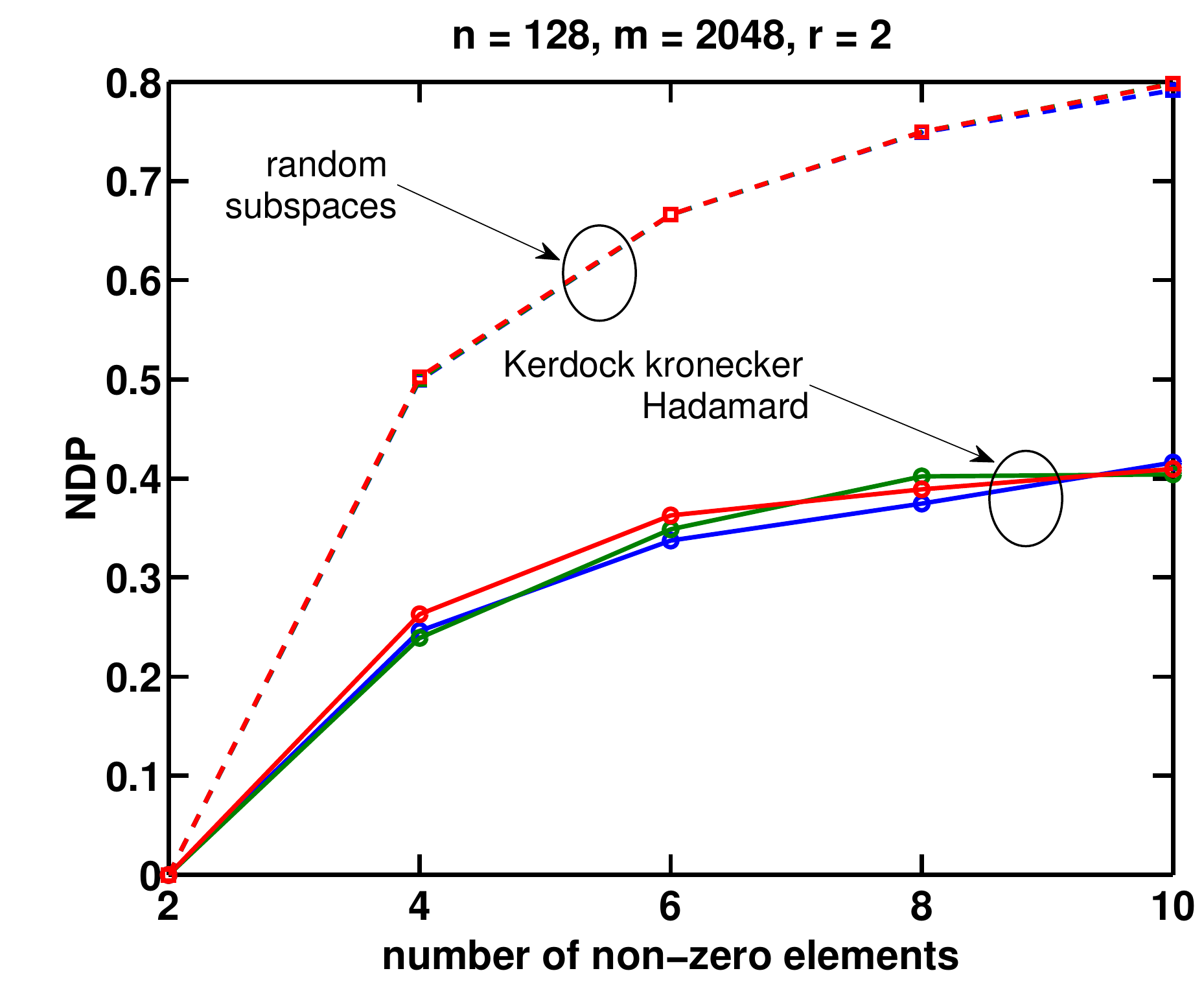}}
\end{center}
\caption{Non-discovery proportion (NDP) for the random subspaces and (a) when $A$ is formed as equiangular tight frame (ETF) with a Hadamard matrix;
(b) when $A$ is formed as the Kronecker product of a Kerdock matrix with a Hadamard matrix.}
\label{performance_comp}
\end{figure}

\begin{figure}[h]
\begin{center}
\subfloat[$\mu = 0.500, \nu = 0.019$]{\label{Fig:d1}
\includegraphics[width = .44\linewidth]{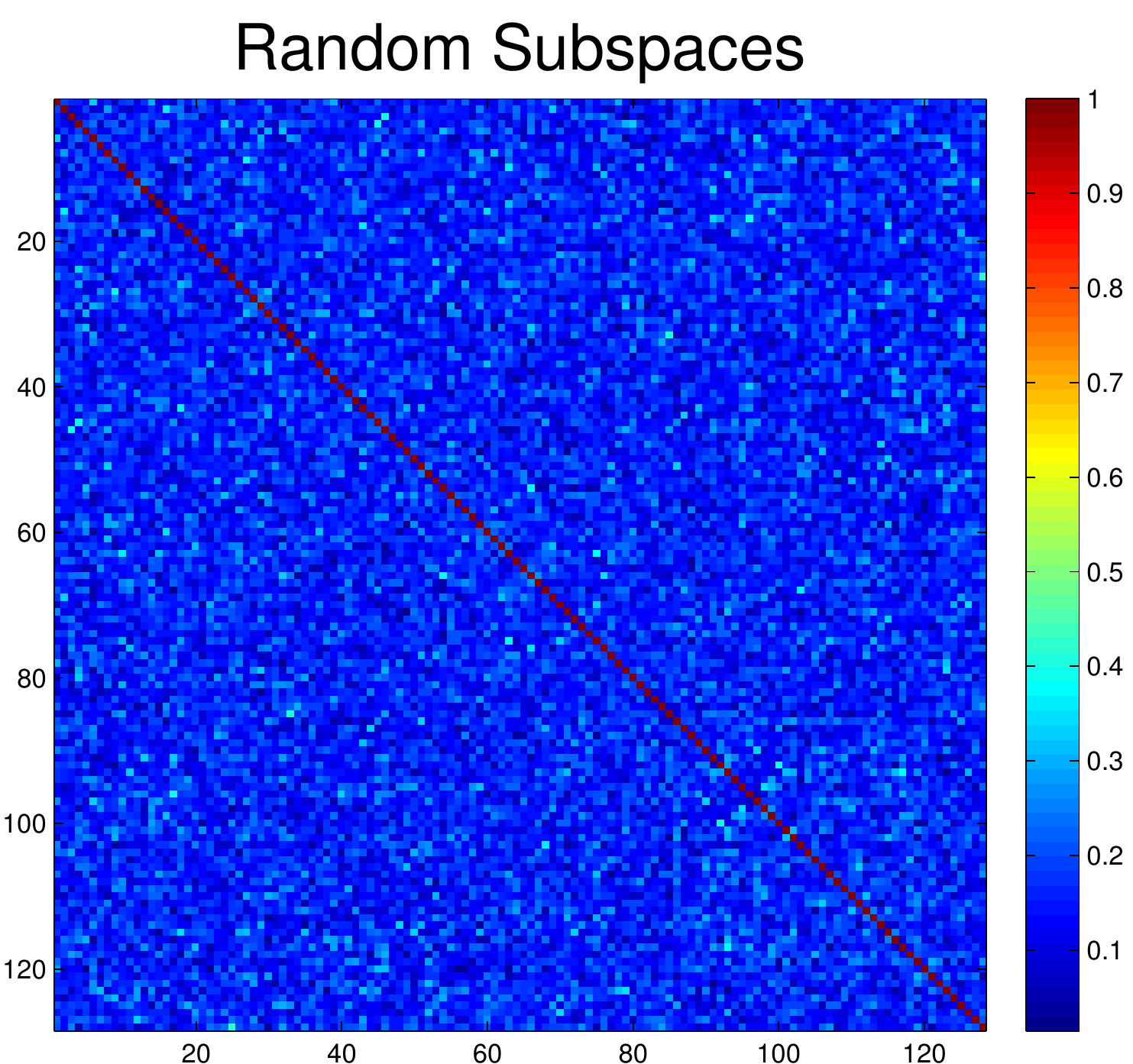}}
\qquad
\subfloat[$\mu = 0.125,\nu = 4.885\times 10^{-4}$]
{\label{Fig:d2}
\includegraphics[width = .44\linewidth]{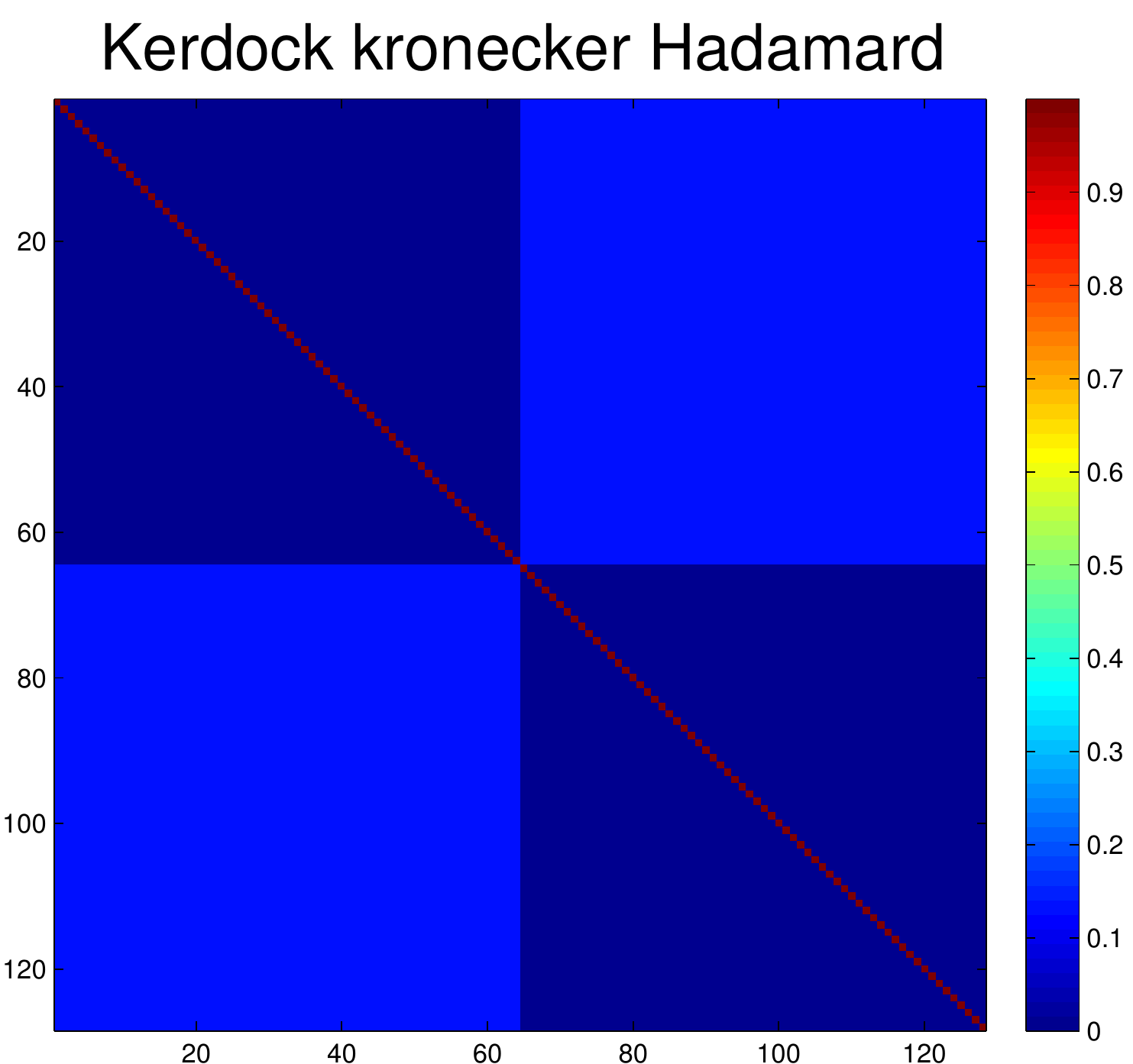}}
\end{center}
\caption{Maps $\|A_i^*A_j\|_2$ for random subspaces, and $A$ formed by Kerdock matrix kronecker with Hadamard matrix.}
\label{Fig:d}
\end{figure}

In the next example, we demonstrate the effectiveness of the flipping algorithm in reducing the average coherence of a frame consisting of random subspaces. We apply the flipping algorithm on 10 realization of random frames with size $n = 128$, $m = 2048$, and $r = 1, 2, 3$, respectively. The flipping algorithm reduces the average coherence of the random frame by a significant percentage.

\begin{table}[h]
\caption{$\nu$ before and after flipping algorithm for union of random subspaces}
\begin{center}
\begin{tabular}{|c||c|c|c|}
\hline
& $r = 1$ & $r = 2$ &  $r = 3$ \\\hline\hline
before &0.0096  &  0.0181  &  0.0111 \\\hline
after & 0.0015  &  0.0027   & 0.0037 \\\hline
$\%$ of improvement &  84.6$\%$  &  85.2$\%$  &  66.6$\%$ \\\hline
\end{tabular}
\end{center}
\end{table}

\begin{figure}[h]
\begin{center}
\includegraphics[width = 0.4\textwidth]{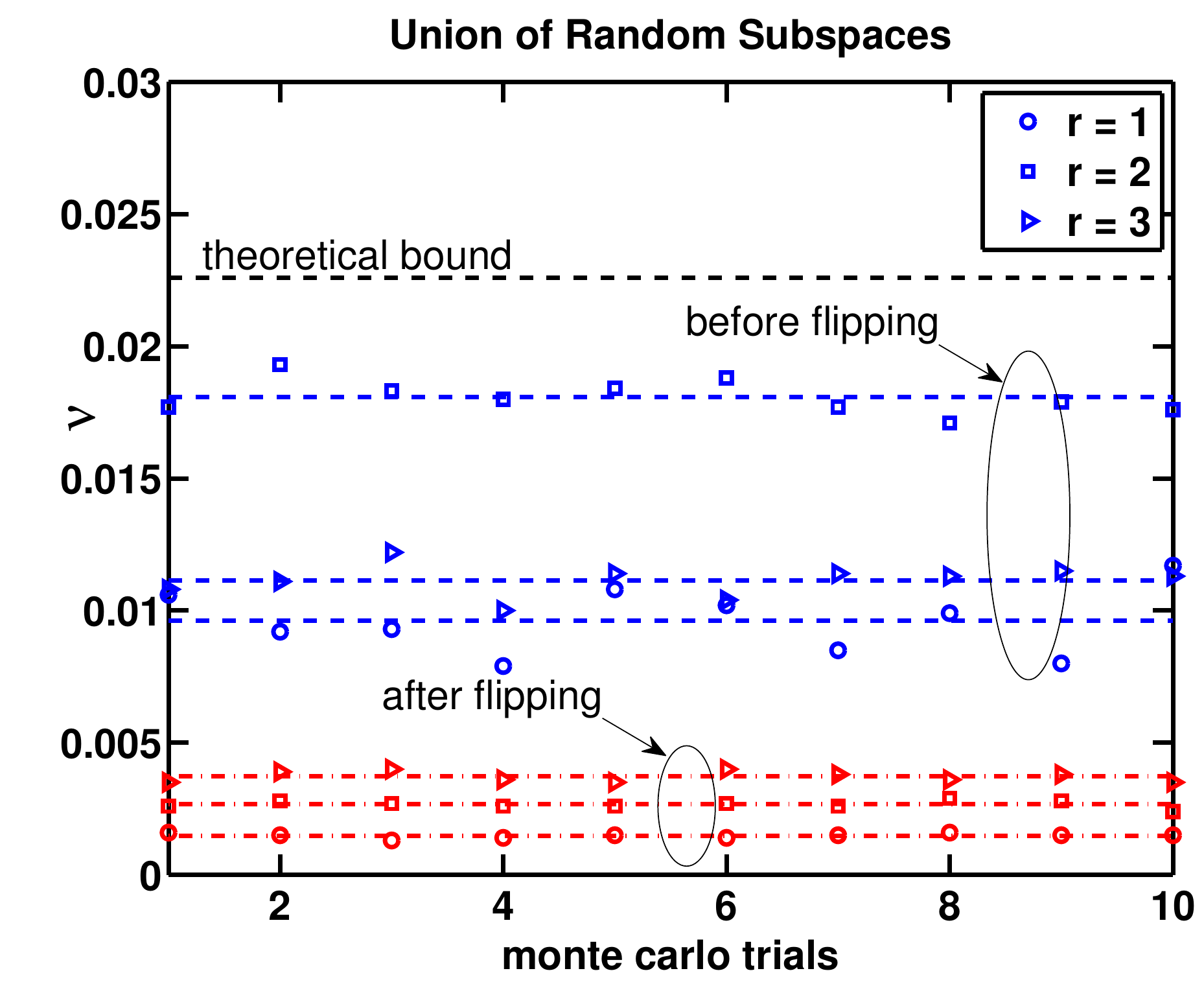}
\caption{Average block coherence of a random subspaces, for $r = 1, 2, 3$, respectively, before (blue) and after (red) passing through the flipping algorithm. The dashed lines represent mean of random realization of the average block coherence, before and after the flipping algorithm is applied. The theoretical bound corresponds to Lemma \ref{lemma2}.}
\label{fig:flipping}
\end{center}
\end{figure}

\section{Conclusion}

In this paper, we have studied worst-case and average block coherence of frames: from the perspective of compressed sensing, we are interested in matrices for which both of these quantities are low/optimal. We have shown that worst-case block coherence is subject to a fundamental lower bound, and we have presented deterministic constructions which achieve it. On the other hand, our analysis of random subspaces shows that they have only slightly sub-optimal worst-case block coherence. Meanwhile, low average block coherence can either be achieved by construction, or by means of a ``flipping'' algorithm designed to reduce average block coherence while preserving worst-case block coherence. A byproduct of our work is numerous constructions of families of optimal and almost optimal Grassmann packings.

Some questions remain: does there exist a fundamental lower bound on average block coherence, and if so which frames achieve it? What is the average block coherence in the case of random subspaces? Our optimal Grassmann packings give a smaller number of subspaces than the present theoretical limits: what is the true limit for the number of subspaces in an optimal Grassmann packing, and are there constructions which achieve it?

\appendix

\section{Proof of Theorem~\ref{random_result}}\label{random_proofs}

We first introduce some further definitions.

Given $p,q>0$ and $x\in[0,1]$, define the beta function~\cite[Sections 6.2]{handbook} to be
\begin{equation}\label{beta_fn}
B(p,q):=\int_0^1 t^{p-1}(1-t)^{q-1}\, dt,
\end{equation}
and define the regularized incomplete beta function~\cite[Section 6.6]{handbook} to be
\begin{equation}\label{reg_beta_fn}
I_x (p,q):=\frac{1}{B(p,q)}\int_0^x t^{p-1}(1-t)^{q-1}\, dt.
\end{equation}

The gamma and beta functions defined in (\ref{gamma_fn}) and (\ref{beta_fn}) are related as follows.

\begin{lemma}[{Beta-gamma identity~\cite[Theorem 7]{rainville}}]\label{beta_gamma}
For any $p,q>0$,
\begin{equation}\label{beta_gamma_identity}
B(p,q)=\frac{\Gm(p)\Gm(q)}{\Gm(p+q)}.
\end{equation}
\end{lemma}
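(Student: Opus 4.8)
\section*{Proof proposal for Lemma~\ref{beta_gamma} (the Beta--Gamma identity)}

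The plan is to prove $B(p,q)=\Gm(p)\Gm(q)/\Gm(p+q)$ by evaluating the product $\Gm(p)\Gm(q)$ as a double integral and applying a change of variables that separates it into a radial part (reproducing $\Gm(p+q)$) and an angular part (reproducing $B(p,q)$). First I would write out the two gamma factors from the definition (\ref{gamma_fn}) as
\begin{equation*}
\Gm(p)\Gm(q)=\left(\int_0^{\infty}s^{p-1}e^{-s}\,ds\right)\left(\int_0^{\infty}t^{q-1}e^{-t}\,dt\right)=\int_0^{\infty}\int_0^{\infty}s^{p-1}t^{q-1}e^{-(s+t)}\,ds\,dt,
\end{equation*}
where combining the two single integrals into an iterated double integral over the first quadrant is justified by Tonelli's theorem, since the integrand is nonnegative on $(0,\infty)^2$ for $p,q>0$.

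Next I would introduce the substitution $s=u v$ and $t=u(1-v)$, where $u=s+t\in(0,\infty)$ records the total and $v=s/(s+t)\in(0,1)$ records the proportion. This is the key step, and the one I expect to be the main technical obstacle: one must verify that the map $(s,t)\mapsto(u,v)$ is a bijection from $(0,\infty)^2$ onto $(0,\infty)\times(0,1)$ and compute its Jacobian. A direct calculation of the partial derivatives gives the Jacobian determinant $\partial(s,t)/\partial(u,v)=u$, so that $ds\,dt=u\,du\,dv$. Substituting $s+t=u$, $s^{p-1}=(uv)^{p-1}$ and $t^{q-1}=(u(1-v))^{q-1}$ yields
\begin{equation*}
\Gm(p)\Gm(q)=\int_0^1\int_0^{\infty}(uv)^{p-1}\bigl(u(1-v)\bigr)^{q-1}e^{-u}\,u\,du\,dv=\int_0^1 v^{p-1}(1-v)^{q-1}\,dv\int_0^{\infty}u^{p+q-1}e^{-u}\,du,
\end{equation*}
where in the last step the integrand factors cleanly into a function of $v$ alone and a function of $u$ alone, so Tonelli permits separating the iterated integral into a product of single integrals.

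Finally I would identify the two resulting factors against their definitions: the $v$-integral is exactly $B(p,q)$ by (\ref{beta_fn}), and the $u$-integral is $\Gm(p+q)$ by (\ref{gamma_fn}) since the exponent is $(p+q)-1$. Hence $\Gm(p)\Gm(q)=B(p,q)\,\Gm(p+q)$, and dividing through by $\Gm(p+q)$ (which is finite and positive for $p+q>0$) gives the claimed identity (\ref{beta_gamma_identity}). The only subtleties to watch are the convergence of the original integrals near $0$ (guaranteed by $p,q>0$) and the careful bookkeeping of the change of variables; the rest is routine.
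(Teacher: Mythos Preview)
Your proof is correct and is precisely the standard textbook argument for this identity. Note that the paper does not actually supply its own proof of Lemma~\ref{beta_gamma}; it merely cites the result from Rainville. So there is nothing to compare against, and your self-contained derivation via the change of variables $(s,t)\mapsto(u,v)=(s+t,\,s/(s+t))$ is entirely appropriate.
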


We define $H(\rr)$, the Shannon entropy with base $e$ logarithms, in the usual way as

\begin{equation}\label{shannon_def}
H(\rr):=-\rr\ln\rr-(1-\rr)\ln(1-\rr),\;\;\;\;\mbox{for}\;\rr\in(0,1).
\end{equation}

The next result describes the asymptotic behavior of the gamma function.

\begin{lemma}[{Log-gamma asymptotic~\cite[Theorem 12 and (4)]{rainville}}]\label{binet}
For $p>0$,
$$\ln\Gm(p)=\left(p-\frac{1}{2}\right)\ln p-p+\frac{1}{2}\ln(2\pi)+o(1).$$
\end{lemma}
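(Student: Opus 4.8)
The plan is to establish this as Stirling's asymptotic for the Gamma function, derived directly from the integral representation (\ref{gamma_fn}) via Laplace's method. It is cleanest to work with $\Gm(p+1)$ first and then use the functional equation $\Gm(p+1)=p\,\Gm(p)$ to recover $\ln\Gm(p)$. Starting from (\ref{gamma_fn}), write $\Gm(p+1)=\int_0^\infty t^p e^{-t}\,dt$ and make the substitution $t=pu$, which yields
$$\Gm(p+1)=p^{p+1}\int_0^\infty e^{p\phi(u)}\,du,\qquad \phi(u):=\ln u-u.$$
The phase $\phi$ is maximized at $u=1$, where $\phi(1)=-1$, $\phi'(1)=0$, and $\phi''(1)=-1$, so the integrand concentrates sharply around $u=1$ as $p\to\infty$.

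First I would apply Laplace's method to $\int_0^\infty e^{p\phi(u)}\,du$. Splitting the range into a shrinking neighbourhood $|u-1|\le\delta_p$ (with, say, $\delta_p=p^{-1/3}$) and its complement, I would bound the outer contribution by $e^{p\phi(1)}\cdot e^{-c\,p^{1/3}}$ using the strict concavity of $\phi$ together with its decay to $-\infty$ at both endpoints, so that the outer part is negligible relative to the peak value $e^{-p}$. On the inner neighbourhood I would Taylor expand $\phi(u)=-1-\tfrac12(u-1)^2+O(|u-1|^3)$ and control the cubic remainder uniformly on $|u-1|\le\delta_p$, which reduces the inner integral to a truncated Gaussian $\int e^{-\tfrac{p}{2}(u-1)^2}\,du$. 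Extending this Gaussian back to the whole line, at the cost of an exponentially small error, gives
$$\int_0^\infty e^{p\phi(u)}\,du=e^{-p}\sqrt{\tfrac{2\pi}{p}}\,\bigl(1+o(1)\bigr).$$

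Combining with the prefactor $p^{p+1}$ and taking logarithms yields $\ln\Gm(p+1)=\bigl(p+\tfrac12\bigr)\ln p-p+\tfrac12\ln(2\pi)+o(1)$; subtracting $\ln p$ via $\ln\Gm(p)=\ln\Gm(p+1)-\ln p$ then produces exactly the stated formula. The main obstacle is the rigorous bookkeeping of the Laplace approximation: one must choose the window width $\delta_p$ so that the cubic Taylor error, of order $O(p\,\delta_p^3)$, is $o(1)$ while the outer tail, of order $e^{-\Theta(p\,\delta_p^2)}$, remains negligible, and one must confirm that the Gaussian-truncation and extension errors are genuinely $o(1)$ rather than merely bounded. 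An alternative route, which matches the attribution more directly, is to invoke Binet's integral formula
$$\ln\Gm(p)=\Bigl(p-\tfrac12\Bigr)\ln p-p+\tfrac12\ln(2\pi)+\int_0^\infty\Bigl(\tfrac12-\tfrac1t+\tfrac{1}{e^t-1}\Bigr)\frac{e^{-pt}}{t}\,dt,$$
and then argue that the remaining integral tends to $0$ as $p\to\infty$ by dominated convergence; this shortcuts the Laplace estimate but requires establishing Binet's formula as a prerequisite.
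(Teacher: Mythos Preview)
Your proposal is mathematically sound: both the Laplace-method derivation and the Binet-integral route are standard and correct proofs of Stirling's asymptotic, and your handling of the window width $\delta_p=p^{-1/3}$ balances the cubic remainder against the tail decay appropriately.

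However, the paper does not actually prove this lemma at all. It is stated with a citation to Rainville's text and used as a black box in the subsequent asymptotic analysis of the beta function. So there is no ``paper's own proof'' to compare against: you have supplied a genuine argument where the authors simply quote a classical result. Your Binet-formula alternative is in fact closest in spirit to the cited source (Rainville derives the asymptotic via Binet's second formula), whereas your Laplace-method argument is a self-contained alternative that avoids needing Binet's identity as a prerequisite. Either is more than what the paper requires here.
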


We may use Lemma~\ref{binet} to deduce the following limiting result for the beta function when the parameters grow proportionally.

\begin{lemma}[{Beta function limit}]\label{beta_limit}
Let $(p,q)\rightarrow\infty$ such that $p/(p+q)\rightarrow\rr$ as $(p,q)\rightarrow\infty$. Then
\begin{equation}\label{beta_lim}
\lim_{(p,q)\rightarrow\infty}\frac{1}{p+q}\ln B(p,q)=-H(\rr).
\end{equation}
\end{lemma}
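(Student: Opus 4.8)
The plan is to express $\ln B(p,q)$ via the beta-gamma identity of Lemma~\ref{beta_gamma}, apply the log-gamma asymptotic of Lemma~\ref{binet} to each of the three gamma factors, and then extract the leading-order term after dividing by $p+q$. Concretely, writing $B(p,q)=\Gm(p)\Gm(q)/\Gm(p+q)$, we get
\begin{equation}\label{prop_split}
\ln B(p,q)=\ln\Gm(p)+\ln\Gm(q)-\ln\Gm(p+q).
\end{equation}
Substituting the expansion $\ln\Gm(x)=(x-\tfrac12)\ln x - x + \tfrac12\ln(2\pi)+o(1)$ for each of $x=p$, $x=q$, and $x=p+q$, the $-x$ terms cancel (since $-p-q+(p+q)=0$), two of the three $\tfrac12\ln(2\pi)$ terms survive, and we are left with
$$\ln B(p,q)=\left(p-\tfrac12\right)\ln p+\left(q-\tfrac12\right)\ln q-\left(p+q-\tfrac12\right)\ln(p+q)+\tfrac12\ln(2\pi)+o(1).$$

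Next I would divide through by $p+q$ and take the limit. The terms $\tfrac12\ln(2\pi)/(p+q)$ and the $o(1)/(p+q)$ remainder vanish, as do the $-\tfrac12\ln p$, $-\tfrac12\ln q$, $+\tfrac12\ln(p+q)$ contributions once divided by $p+q$ (each is $O((\ln(p+q))/(p+q))\to 0$). So the limit is governed by
$$\frac{1}{p+q}\Big[p\ln p+q\ln q-(p+q)\ln(p+q)\Big]=\frac{p}{p+q}\ln\frac{p}{p+q}+\frac{q}{p+q}\ln\frac{q}{p+q},$$
where I have rewritten $p\ln p-(p+q)\ln(p+q)$ by pulling out $\ln(p+q)$ appropriately: $p\ln p+q\ln q-(p+q)\ln(p+q)=p\ln\frac{p}{p+q}+q\ln\frac{q}{p+q}$. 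Setting $\rr=\lim p/(p+q)$, so that $q/(p+q)\to 1-\rr$, and using continuity of $x\ln x$ on $[0,1]$, this converges to $\rr\ln\rr+(1-\rr)\ln(1-\rr)=-H(\rr)$ by the definition~(\ref{shannon_def}), which is exactly~(\ref{beta_lim}).

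The only mild subtlety — and the one step that deserves care rather than being waved through — is the handling of the $o(1)$ error terms in Lemma~\ref{binet}: one must check that the three $o(1)$'s, each arising as $(p,q)\to\infty$ (equivalently $p+q\to\infty$, using that $p/(p+q)\to\rr\in(0,1)$ forces both $p\to\infty$ and $q\to\infty$), combine to an $o(1)$ which then vanishes after division by $p+q$. This is routine but worth stating. A secondary point is the edge behaviour: if $\rr\in(0,1)$ strictly, then $p,q\to\infty$ individually so all expansions are legitimate; the cases $\rr=0$ or $\rr=1$ are excluded by~(\ref{shannon_def}) anyway. No other obstacle arises; the argument is essentially a bookkeeping exercise on Stirling's formula.
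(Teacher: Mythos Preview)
Your proposal is correct and follows essentially the same route as the paper: apply the beta--gamma identity (Lemma~\ref{beta_gamma}), substitute the log-gamma asymptotic (Lemma~\ref{binet}) for each factor, rewrite $p\ln p+q\ln q-(p+q)\ln(p+q)$ as $p\ln\frac{p}{p+q}+q\ln\frac{q}{p+q}$, divide by $p+q$, and take the limit to obtain $-H(\rr)$. If anything, you are more careful than the paper, which silently drops the $-\tfrac12\ln(\cdot)$, $\tfrac12\ln(2\pi)$, and $o(1)$ contributions, whereas you explicitly argue that each is $O((\ln(p+q))/(p+q))$ or smaller and hence vanishes.
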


\begin{proof}
By (\ref{beta_gamma_identity}) and (\ref{beta_lim}), we have
\begin{eqnarray}
&&\lim_{(p,q)\rightarrow\infty}\frac{1}{p+q}\ln B(p,q)\nonumber\\
&&=\lim_{(p,q)\rightarrow\infty}\frac{1}{p+q}\ln\left[\frac{\Gm(p)\Gm(q)}{\Gm(p+q)}\right]\nonumber
\end{eqnarray}
\begin{eqnarray}
&&=\lim_{(p,q)\rightarrow\infty}\frac{1}{p+q}[p\ln p-p+q\ln q-q-(p+q)\ln(p+q)+(p+q)]\nonumber\\
&&=\lim_{(p,q)\rightarrow\infty}\frac{1}{p+q}\ln\left[\frac{p^p q^q}{(p+q)^{p+q}}\right]\nonumber\\
&&=\lim_{(p,q)\rightarrow\infty}\frac{1}{p+q}\ln\left[\frac{\left(\frac{p}{p+q}\right)^p \left(\frac{q}{p+q}\right)^q}{\left(\frac{p+q}{p+q}\right)^{p+q}}\right]\nonumber\\
&&=\lim_{(p,q)\rightarrow\infty}\left[\frac{p}{p+q}\ln\left(\frac{p}{p+q}\right)+\frac{q}{p+q}\ln\left(\frac{q}{p+q}\right)\right]\nonumber\\
&&=\rr\ln\rr+(1-\rr)\ln(1-\rr),\nonumber
\end{eqnarray}
and the result follows by (\ref{shannon_def}).
\end{proof}

The next result gives the limiting behavior of the incomplete regularized beta function in the case where the parameters grow proportionally.

\begin{lemma}[{Regularized incomplete beta limit~\cite[Theorem 4.18]{mythesis}}]\label{reg_beta_lim}
Fix $0<x<1$ and let $p,q>0$ satisfy
$$\frac{p}{p+q}>x.$$
Let $(p,q)\rightarrow\infty$ such that
$$\lim_{p\ra\infty}\frac{p}{p+q}=\rr,$$
where $\rr>x$. Then
\begin{equation}\label{beta}
\lim_{p\ra\infty}\frac{1}{p+q}\ln I_x(p,q)=-\left[\rr\ln\left(\frac{\rr}{x}\right)+(1-\rr)\ln\left(\frac{1-\rr}{1-x}\right)\right].
\end{equation}
\end{lemma}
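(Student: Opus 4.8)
The plan is to extract the exponential rate of the integral $\int_0^x t^{p-1}(1-t)^{q-1}\,dt$ via a Laplace-type argument, and then subtract the rate of $B(p,q)$ already computed in Lemma~\ref{beta_limit}. First I would write the integrand as $t^{p-1}(1-t)^{q-1} = \exp\{(p-1)\ln t + (q-1)\ln(1-t)\}$, and observe that on $[0,x]$, with $p/(p+q)\to\rho>x$, the function $\phi_{p,q}(t) := \tfrac{p-1}{p+q}\ln t + \tfrac{q-1}{p+q}\ln(1-t)$ is increasing in $t$ throughout $[0,x]$ for large $p+q$; indeed its derivative $\tfrac{p-1}{(p+q)t}-\tfrac{q-1}{(p+q)(1-t)}$ is positive precisely when $t < \tfrac{p-1}{p+q-2}$, and this threshold converges to $\rho > x$. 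Hence the integral is dominated by the endpoint $t=x$.

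For the upper bound on the rate, I would bound $\int_0^x t^{p-1}(1-t)^{q-1}\,dt \le x\cdot x^{p-1}(1-x)^{q-1}$ since the integrand is increasing, giving
\[
\frac{1}{p+q}\ln\int_0^x t^{p-1}(1-t)^{q-1}\,dt \le \frac{1}{p+q}\bigl[(p-1)\ln x + (q-1)\ln(1-x) + \ln x\bigr]\longrightarrow \rho\ln x + (1-\rho)\ln(1-x).
\]
For the matching lower bound, I would restrict the integral to a small interval $[x-\delta,x]$ and use monotonicity again: $\int_{x-\delta}^x t^{p-1}(1-t)^{q-1}\,dt \ge \delta\cdot(x-\delta)^{p-1}(1-x)^{q-1}$, which yields $\liminf \tfrac{1}{p+q}\ln(\cdot) \ge \rho\ln(x-\delta)+(1-\rho)\ln(1-x)$; letting $\delta\to 0$ gives the same limit. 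So $\tfrac{1}{p+q}\ln\int_0^x t^{p-1}(1-t)^{q-1}\,dt \to \rho\ln x + (1-\rho)\ln(1-x)$.

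Finally I would combine this with Lemma~\ref{beta_limit}, which gives $\tfrac{1}{p+q}\ln B(p,q)\to \rho\ln\rho + (1-\rho)\ln(1-\rho) = -H(\rho)$. Since $I_x(p,q) = \tfrac{1}{B(p,q)}\int_0^x t^{p-1}(1-t)^{q-1}\,dt$,
\[
\lim_{p\to\infty}\frac{1}{p+q}\ln I_x(p,q) = \bigl[\rho\ln x + (1-\rho)\ln(1-x)\bigr] - \bigl[\rho\ln\rho + (1-\rho)\ln(1-\rho)\bigr] = -\left[\rho\ln\frac{\rho}{x} + (1-\rho)\ln\frac{1-\rho}{1-x}\right],
\]
which is (\ref{beta}). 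The main obstacle is making the monotonicity/endpoint-concentration argument fully rigorous — in particular confirming that the threshold $\tfrac{p-1}{p+q-2}$ stays bounded away from $x$ for all large $p+q$ (which follows from $\rho>x$ and the hypothesis $\tfrac{p}{p+q}>x$), and handling the $o(1/(p+q))$ and $\ln\delta/(p+q)$ error terms uniformly; the rest is a routine Laplace-method estimate.
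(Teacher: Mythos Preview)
The paper does not prove this lemma in the text; it simply cites it as Theorem~4.18 of an external thesis and uses the result. Your Laplace-type endpoint argument is therefore not a reproduction of anything in the paper but a self-contained proof, and it is correct.

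The structure is sound: because $\tfrac{p-1}{p+q-2}\to\rho>x$, the integrand $t^{p-1}(1-t)^{q-1}$ is eventually increasing on all of $[0,x]$, so the mass concentrates at $t=x$; the upper bound by $x\cdot x^{p-1}(1-x)^{q-1}$ and the lower bound over $[x-\delta,x]$ then pin down the exponential rate of the numerator, and subtracting the rate of $B(p,q)$ from Lemma~\ref{beta_limit} yields the stated KL-divergence expression. One small remark: in your lower bound you write $\int_{x-\delta}^x\ge\delta\,(x-\delta)^{p-1}(1-x)^{q-1}$ and attribute it to ``monotonicity again,'' but strict monotonicity of the full integrand would give $(x-\delta)^{p-1}(1-x+\delta)^{q-1}$ at the left endpoint. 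Your written inequality is nonetheless valid, since on $[x-\delta,x]$ one has $t^{p-1}\ge(x-\delta)^{p-1}$ and $(1-t)^{q-1}\ge(1-x)^{q-1}$ separately; just make the justification explicit. The order of limits ($p+q\to\infty$ first with $\delta$ fixed, then $\delta\downarrow 0$) is the standard way to handle the $\liminf$, and the $\ln\delta/(p+q)$ term indeed vanishes.
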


A random variable $X$ follows the (univariate) beta distribution $\mathrm{Beta}(a,b)$ for $a,b>0$ if it has probability density function (pdf)
$$f(x)=\frac{1}{B(a,b)}x^{a-1}(1-x)^{b-1};\;\;\;\;0\le x\le 1,$$
and we have the following result concerning its distribution function.

\begin{lemma}
Let $X\sim\mathrm{Beta}(a,b)$. Then its distribution function is
$$F(x)=I_x(a,b),$$
where $I_x(a,b)$ is the regularized incomplete beta function.
\end{lemma}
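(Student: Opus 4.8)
The plan is to argue directly from the definition of the cumulative distribution function. For $x\in[0,1]$ we have
$$F(x)=\PP\{X\le x\}=\int_{-\infty}^x f(t)\,dt=\int_0^x \frac{1}{B(a,b)}t^{a-1}(1-t)^{b-1}\,dt,$$
where the last equality uses that the pdf of $\mathrm{Beta}(a,b)$ is supported on $[0,1]$. Pulling the constant $1/B(a,b)$ outside the integral and comparing with the definition (\ref{reg_beta_fn}) of the regularized incomplete beta function immediately yields $F(x)=I_x(a,b)$, as claimed.

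The only points needing a word of care are the boundary and exterior cases: for $x<0$ both $F(x)$ and $I_x(a,b)$ are $0$, and for $x\ge 1$ we have $F(x)=1$ while $I_1(a,b)=\frac{1}{B(a,b)}\int_0^1 t^{a-1}(1-t)^{b-1}\,dt=1$ by the definition (\ref{beta_fn}) of $B(a,b)$, so the identity holds on all of $\RR$. There is no substantive obstacle here; the result is essentially a restatement of definition (\ref{reg_beta_fn}), and it is recorded only because Lemma~\ref{reg_beta_lim} will be applied to the distribution function of a beta random variable in the proof of Theorem~\ref{random_result}.
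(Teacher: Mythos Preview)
Your proof is correct. The paper itself does not supply a proof of this lemma: it is stated as a standard fact without argument, presumably because the identity is immediate from the definitions of the beta distribution and the regularized incomplete beta function. Your direct computation from the definitions is exactly the natural verification, and there is nothing more to add.
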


We may deduce from (\ref{beta_pdf}) the following bound on the pdf of the largest squared singular value $\lm_1$.

\begin{lemma}\label{pdf_bound}
Let $(\lm_1,\lm_2,\ldots,\lm_r)$ be distributed as in Lemma~\ref{random_pdf}. Then the following bound holds for the pdf of $\lm_1$.
\begin{equation}\label{bounded_pdf}
f(\lm_1)\le\frac{\sqrt{\pi}}{\left[B\left(\frac{r}{2}\right),\frac{n-r}{2}\right]^2}\cdot\lm_1^{\frac{1}{2}(2r-1)-1}(1-\lm_1)^{\frac{1}{2}(n-2r+1)-1},
\end{equation}
where $B(\cdot,\cdot)$ is defined in (\ref{beta_fn}).
\end{lemma}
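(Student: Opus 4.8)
The plan is to obtain the marginal density $f(\lm_1)$ by integrating the joint density (\ref{beta_pdf}) over the remaining variables $\lm_2,\dots,\lm_r$, and then to decouple the resulting integral by replacing the coupling Vandermonde factor $\prod_{i<j}(\lm_i-\lm_j)$ with a crude but convenient upper bound. Since (\ref{beta_pdf}) is written under the ordering $\lm_1\geq\lm_2\geq\cdots\geq\lm_r\geq 0$, on which every factor $\lm_i-\lm_j$ with $i<j$ is nonnegative, the marginal is
\[
f(\lm_1)=c_{n,r}\,\lm_1^{-1/2}(1-\lm_1)^{\frac12(n-2r-1)}\!\!\int\limits_{\lm_1\geq\lm_2\geq\cdots\geq\lm_r\geq 0}\!\!\prod_{i<j}(\lm_i-\lm_j)\prod_{i=2}^{r}\lm_i^{-1/2}(1-\lm_i)^{\frac12(n-2r-1)}\,d\lm_2\cdots d\lm_r .
\]

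The key step is to bound the Vandermonde factor. On the ordered region one has $0\leq\lm_i-\lm_j\leq\lm_i$ for each pair $1\le i<j\le r$, hence $\prod_{i<j}(\lm_i-\lm_j)\leq\prod_{i<j}\lm_i=\prod_{i=1}^{r-1}\lm_i^{\,r-i}=\lm_1^{\,r-1}\prod_{k=2}^{r-1}\lm_k^{\,r-k}$. I would substitute this in, pull $\lm_1^{\,r-1}$ out of the integral, and enlarge the domain of integration from the ordered simplex to the full cube $[0,1]^{r-1}$ (legitimate because the integrand is nonnegative). The integral then factorises into a product of univariate beta integrals, and using the definition (\ref{beta_fn}) we get
\[
f(\lm_1)\leq c_{n,r}\,\lm_1^{\,r-3/2}(1-\lm_1)^{\frac12(n-2r-1)}\prod_{k=2}^{r}B\!\left(r-k+\tfrac12,\ \tfrac{n-2r+1}{2}\right).
\]
The powers of $\lm_1$ and $1-\lm_1$ appearing here are already exactly $\tfrac12(2r-1)-1$ and $\tfrac12(n-2r+1)-1$, matching (\ref{bounded_pdf}).

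It then remains to check that the leading constant $c_{n,r}\prod_{k=2}^{r}B\big(r-k+\tfrac12,\tfrac{n-2r+1}{2}\big)$ is bounded by $\sqrt{\pi}\big/\big[B\big(\tfrac r2,\tfrac{n-r}{2}\big)\big]^2$. I would do this by expanding $c_{n,r}$ through the definition (\ref{multi_gamma}) of the multivariate gamma function, converting every beta function to a ratio of ordinary gamma functions via the beta--gamma identity of Lemma~\ref{beta_gamma}, and simplifying; the products of $\Gamma$-values at half-integer arguments telescope, and any residual slack is absorbed by the elementary Gautschi-type inequality $\Gamma(x)/\Gamma(x+\tfrac12)\leq x^{-1/2}$. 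I expect this constant comparison to be the main obstacle --- not conceptually hard, but it requires care in tracking the half-integer shifts --- whereas the marginalisation and the Vandermonde bound are routine; the cases $r=1$ and $r=2$ can be verified directly.
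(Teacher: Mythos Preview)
Your marginalisation and Vandermonde bound are correct, and for $r\le 2$ your argument in fact coincides with the paper's. The gap is in the constant comparison: bounding \emph{all} Vandermonde factors by $\prod_{i<j}\lm_i$ is too crude, and the resulting constant $c_{n,r}\prod_{k=2}^r B\big(r-k+\tfrac12,\tfrac{n-2r+1}{2}\big)$ can genuinely exceed $\sqrt{\pi}/\big[B(\tfrac r2,\tfrac{n-r}{2})\big]^2$. For instance at $r=3$, $n=7$ one computes $c_{7,3}=45/2$ and your beta product is $B(\tfrac32,1)\,B(\tfrac12,1)=\tfrac23\cdot 2=\tfrac43$, giving a constant of $30$, whereas the target is $225\sqrt{\pi}/16\approx 24.9$. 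So the telescoping-plus-Gautschi step you anticipate cannot close the gap in general; your sanity checks at $r=1,2$ happen to be exactly the cases where no loss occurs.

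The paper avoids this by bounding only the $r-1$ factors that involve $\lm_1$, namely $\prod_{j\ge 2}(\lm_1-\lm_j)\le\lm_1^{r-1}$, and leaving the remaining Vandermonde $\prod_{2\le i<j}(\lm_i-\lm_j)$ intact. The surviving integrand over $(\lm_2,\dots,\lm_r)$ is then recognised as the joint density (\ref{beta_pdf}) with parameters shifted $(n,r)\mapsto(n-2,r-1)$, so the integral equals $1/c_{n-2,r-1}$ exactly. The constant becomes the clean ratio $c_{n,r}/c_{n-2,r-1}$, which after expanding the multivariate gammas collapses to $\sqrt{\pi}\,\Gm(\tfrac n2)\Gm(\tfrac{n-1}{2})\big/\big[\Gm(\tfrac r2)\Gm(\tfrac{n-r}{2})\big]^2$, and a single inequality $\Gm(\tfrac{n-1}{2})\le\Gm(\tfrac n2)$ finishes. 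By discarding the lower-order Vandermonde structure you are effectively replacing a Selberg-type integral with a product of independent beta integrals, and that over-counts by a factor growing with $r$; retaining the structure is precisely what keeps the constant sharp enough for the stated bound.
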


\begin{proof}
Let $R$ be the region of $(r-1)$-dimensional space $\lm_i\geq 0;\;i=2,3,\ldots,r,$ and given $\lm_1>0$, let $R_{\lm_1}$ be the sub-region of $R$ consisting of all $(\lm_2,\ldots,\lm_r)$ such that $\lm_1\geq\lm_2\ldots\geq\lm_r\geq 0$. Then
\begin{eqnarray}
f(\lm_1)&=&c_{n,r}\lm_1^{-\frac{1}{2}}(1-\lm_1)^{\frac{1}{2}(n-2r-1)}\nonumber\\
&&\cdot\int_{R_{\lm_1}}\prod_{i<j}(\lm_1-\lm_j)\cdot\prod_{i=2}^r\lm_i^{-\frac{1}{2}}(1-\lm_i)^{\frac{1}{2}(n-2r-1)}\,d\lm_i\nonumber\\
&\le&c_{n,r}\lm_1^{r-\frac{3}{2}}(1-\lm_1)^{\frac{1}{2}(n-2r-1)}\nonumber\\
&&\cdot\int_{R_{\lm_1}}\prod_{2\le i<j}(\lm_1-\lm_j)\cdot\prod_{i=2}^r\lm_i^{-\frac{1}{2}}(1-\lm_i)^{\frac{1}{2}(n-2r-1)}\,d\lm_i\nonumber\\
&\le&c_{n,r}\lm_1^{r-\frac{3}{2}}(1-\lm_1)^{\frac{1}{2}(n-2r-1)}\nonumber\\
&&\cdot\int_{R}\prod_{2\le i<j}(\lm_1-\lm_j)\cdot\prod_{i=2}^r\lm_i^{-\frac{1}{2}}(1-\lm_i)^{\frac{1}{2}(n-2r-1)}\,d\lm_i\nonumber\\
&=&\frac{c_{n,r}}{c_{n-2,r-1}}\lm_1^{\frac{1}{2}(2r-1)-1}(1-\lm_1)^{\frac{1}{2}(n-2r+1)-1},\label{beta_bound}
\end{eqnarray}
where the last line follows by observing that we are integrating a multivariate beta distribution with $(n,r)\rightarrow(n-2,r-1)$. It remains to calculate $c_{n,r}/c_{n-2,r-1}$. We have
\begin{eqnarray}
\frac{c_{n,r}}{c_{n-2,r-1}}&=&\frac{\pi^{\frac{1}{2}r^2}\Gm_r\left(\frac{n}{2}\right)}{\left[\Gm_r\left(\frac{r}{2}\right)\right]^2\Gm_r\left(\frac{n-r}{2}\right)}\cdot\frac{\left[\Gm_{r-1}\left(\frac{r-1}{2}\right)\right]^2\Gm_{r-1}\left(\frac{n-r-1}{2}\right)}{\pi^{\frac{1}{2}(r-1)^2}\Gm_{r-1}\left(\frac{n-2}{2}\right)}\nonumber
\end{eqnarray}
\begin{eqnarray}
&=&\sqrt{\pi}\cdot\prod_{j=1}^r\frac{\Gm\left(\frac{n+1-j}{2}\right)}{\left[\Gm\left(\frac{r+1-j}{2}\right)\right]^2\Gm\left(\frac{n-r+1-j}{2}\right)}\cdot\prod_{j=1}^{r-1}\frac{\left[\Gm\left(\frac{r-j}{2}\right)\right]^2\Gm\left(\frac{n-r-j}{2}\right)}{\Gm\left(\frac{n-1-j}{2}\right)}\nonumber\\
&=&\sqrt{\pi}\cdot\frac{\Gm\left(\frac{n}{2}\right)\Gm\left(\frac{n-1}{2}\right)}{\left[\Gm\left(\frac{r}{2}\right)\right]^2\left[\Gm\left(\frac{n-r}{2}\right)\right]^2}\nonumber\\
&\le&\sqrt{\pi}\cdot\frac{\left[\Gm\left(\frac{n}{2}\right)\right]^2}{\left[\Gm\left(\frac{r}{2}\right)\right]^2\left[\Gm\left(\frac{n-r}{2}\right)\right]^2}\nonumber\\
&=&\frac{\sqrt{\pi}}{\left[B\left(\frac{r}{2},\frac{n-r}{2}\right)\right]^2},\nonumber
\end{eqnarray}
which combines with (\ref{beta_bound}) to prove (\ref{bounded_pdf}).
\end{proof}

Now let us write $\bar{F}(\lm_1)$ for complementary distribution function of $\lm_i$, which we next bound.

\begin{lemma}\label{beta_dist}
Let $(\lm_1,\lm_2,\ldots,\lm_r)$ be distributed as in Lemma~\ref{random_pdf}. Then the following bound holds for $\bar{F}(\lm_1)$.
\begin{eqnarray}\label{comp_bound}
\bar{F}(\lm_1)\le G(\lm_1)&:=&\sqrt{\pi}\cdot\frac{B\left[\frac{1}{2}(2r-1),\frac{1}{2}(n-2r+1)\right]}{\left[B\left(\frac{r}{2},\frac{n-r}{2}\right)\right]^2}\nonumber\\
&&\cdot I_{1-\lm_1}\left[\frac{1}{2}(n-2r+1),\frac{1}{2}(2r-1)\right],
\end{eqnarray}
where $B(\cdot,\cdot)$ is defined in (\ref{beta_fn}) and where $I_x(a,b)$ is defined in (\ref{reg_beta_fn}).
\end{lemma}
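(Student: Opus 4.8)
The plan is to integrate the pointwise bound on the pdf of $\lm_1$ from Lemma~\ref{pdf_bound} over the tail $[\lm_1,1]$ and recognize the resulting integral as a (regularized) incomplete beta function. Concretely, starting from
$$\bar{F}(\lm_1)=\int_{\lm_1}^1 f(t)\,dt\le\frac{\sqrt{\pi}}{\left[B\left(\frac{r}{2},\frac{n-r}{2}\right)\right]^2}\int_{\lm_1}^1 t^{\frac{1}{2}(2r-1)-1}(1-t)^{\frac{1}{2}(n-2r+1)-1}\,dt,$$
I would substitute $u=1-t$, which turns the integral into $\int_0^{1-\lm_1} u^{\frac{1}{2}(n-2r+1)-1}(1-u)^{\frac{1}{2}(2r-1)-1}\,du$, i.e.\ exactly $B\left[\frac{1}{2}(n-2r+1),\frac{1}{2}(2r-1)\right]\cdot I_{1-\lm_1}\left[\frac{1}{2}(n-2r+1),\frac{1}{2}(2r-1)\right]$ by the definitions (\ref{beta_fn}) and (\ref{reg_beta_fn}). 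Since $B(a,b)=B(b,a)$, the prefactor $B\left[\frac{1}{2}(n-2r+1),\frac{1}{2}(2r-1)\right]$ equals $B\left[\frac{1}{2}(2r-1),\frac{1}{2}(n-2r+1)\right]$, which is how it appears in the statement of $G(\lm_1)$. Assembling these pieces gives the claimed bound $\bar{F}(\lm_1)\le G(\lm_1)$.

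The key steps, in order, are: (i) write $\bar{F}(\lm_1)$ as the tail integral of $f(\lm_1)$; (ii) apply Lemma~\ref{pdf_bound} to pull out the constant $\sqrt{\pi}/[B(r/2,(n-r)/2)]^2$ and reduce to an integral of $t^{a-1}(1-t)^{b-1}$ with $a=\frac12(2r-1)$ and $b=\frac12(n-2r+1)$; (iii) change variables $u=1-t$ to flip the limits to $[0,1-\lm_1]$ and swap the roles of the two exponents; (iv) identify the result with $B(b,a)\,I_{1-\lm_1}(b,a)$ via (\ref{reg_beta_fn}); and (v) use symmetry of $B$ to match the stated form of $G$. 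One should also note the mild caveat that the exponents $a,b$ must be positive for these beta quantities to be defined: $a=\frac12(2r-1)>0$ always, and $b=\frac12(n-2r+1)>0$ precisely when $n\ge 2r$, which is exactly the standing assumption of this section (and $n>2r$ in the regime $\beta\in(0,1/2)$ relevant to Theorem~\ref{random_result}), so there is no difficulty.

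There is essentially no hard obstacle here; the argument is a routine manipulation of beta integrals, and the only thing to get right is bookkeeping of which parameter plays which role after the substitution $u=1-t$ (the asymmetry between $\frac12(2r-1)$ and $\frac12(n-2r+1)$ is easy to transpose by accident). The one genuinely substantive input is Lemma~\ref{pdf_bound}, which is already established; given that, this lemma is just the statement that integrating the dominating pdf yields the dominating incomplete-beta tail. This bound is exactly the quantity that will subsequently be fed into Lemma~\ref{reg_beta_lim} (the regularized incomplete beta limit) with parameters growing proportionally, to extract the exponential decay rate and thereby prove Theorem~\ref{random_result}.
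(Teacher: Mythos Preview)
Your proposal is correct and takes essentially the same approach as the paper: both integrate the pdf bound of Lemma~\ref{pdf_bound} over $[\lm_1,1]$ and identify the result with a regularized incomplete beta function. The only cosmetic difference is that the paper phrases the last step as recognizing the bounding density as that of a $\mathrm{Beta}\left[\frac{1}{2}(2r-1),\frac{1}{2}(n-2r+1)\right]$ variable and then applying the reflection identity $1-I_{\lm_1}(a,b)=I_{1-\lm_1}(b,a)$, whereas you carry out the substitution $u=1-t$ explicitly; these are the same computation.
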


\begin{proof}
We may rewrite (\ref{pdf_bound}) as
\begin{equation}
f(\lm_1)\le\sqrt{\pi}\cdot\frac{B\left[\frac{1}{2}(2r-1),\frac{1}{2}(n-2r+1)\right]}{\left[B\left(\frac{r}{2},\frac{n-r}{2}\right)\right]^2}\cdot\frac{\lm_1^{\frac{1}{2}(2r-1)-1}(1-\lm_1)^{\frac{1}{2}(n-2r+1)-1}}{B\left[\frac{1}{2}(2r-1),\frac{1}{2}(n-2r+1)\right]},
\end{equation}
and we observe using Lemma~\ref{beta_dist} that the right-hand expression is the pdf of the $B\left[\frac{1}{2}(2r-1),\frac{1}{2}(n-2r+1)\right]$ distribution. We therefore have, by Lemma~\ref{beta_dist},
$$\begin{array}{rcl}
\bar{F}(\lm_1)&\le&\sqrt{\pi}\cdot\frac{B\left[\frac{1}{2}(2r-1),\frac{1}{2}(n-2r+1)\right]}{\left[B\left(\frac{r}{2},\frac{n-r}{2}\right)\right]^2}\cdot\left\{1-I_{\lm_1}\left[\frac{1}{2}(2r-1),\frac{1}{2}(n-2r+1)\right]\right\}\\
&=&\sqrt{\pi}\cdot\frac{B\left[\frac{1}{2}(2r-1),\frac{1}{2}(n-2r+1)\right]}{\left[B\left(\frac{r}{2},\frac{n-r}{2}\right)\right]^2}\cdot I_{1-\lm_1}\left[\frac{1}{2}(n-2r+1),\frac{1}{2}(2r-1)\right],
\end{array}$$
as required.
\end{proof}

The next lemma analyses the exponent of the upper tail of the distribution.

\begin{lemma}\label{exponent}
Let $r/n\rightarrow\beta\in(0,1/2)$ as $n\rightarrow\infty$. Then, for any $2\le a<1/\beta$,
\begin{equation}\label{G_limit}
\lim_{n\rightarrow\infty}\frac{1}{n}\ln G(a\beta)=\beta\ln a+\left(\frac{1-2\beta}{2}\right)\ln(1-a\beta)-(1-\beta)\ln(1-\beta),
\end{equation}
where $H(\cdot)$ is defined in (\ref{shannon_def}).
\end{lemma}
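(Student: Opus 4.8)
The plan is to write $G(a\beta)$ as a product of four elementary factors and pass to the limit in $\tfrac1n\ln(\cdot)$ term by term. The four factors are the constant $\sqrt{\pi}$, the beta function $B\!\left[\tfrac12(2r-1),\tfrac12(n-2r+1)\right]$, the reciprocal square $\left[B\!\left(\tfrac r2,\tfrac{n-r}2\right)\right]^{-2}$, and the regularized incomplete beta function $I_{1-a\beta}\!\left[\tfrac12(n-2r+1),\tfrac12(2r-1)\right]$. The $\sqrt{\pi}$ term contributes nothing in the limit. For the two beta-function factors I would invoke Lemma~\ref{beta_limit}: in the first factor the two parameters sum to exactly $n/2$ and the first parameter over the sum, $(2r-1)/n$, tends to $2\beta$, so $\tfrac1n\ln B\!\left[\tfrac12(2r-1),\tfrac12(n-2r+1)\right]\to-\tfrac12 H(2\beta)$; in the second factor the parameters again sum to $n/2$ with ratio $r/n\to\beta$, so, raised to the power $-2$, this factor contributes $+H(\beta)$. (Both parameters of both beta functions tend to infinity because $0<\beta<1/2$, so Lemma~\ref{beta_limit} indeed applies.)

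For the incomplete beta factor I would apply Lemma~\ref{reg_beta_lim} with $x=1-a\beta$ — a number fixed in $(0,1)$, since $2\le a<1/\beta$ forces $0<a\beta<1$ — and parameters $p=\tfrac12(n-2r+1)$, $q=\tfrac12(2r-1)$, which sum to $n/2$ with $p/(p+q)=(n-2r+1)/n\to 1-2\beta$. The hypothesis $p/(p+q)>x$ of that lemma becomes $1-2\beta>1-a\beta$, i.e. $a>2$; assuming this strict inequality for now, the lemma yields $\tfrac1n\ln I_{1-a\beta}\!\left[\tfrac12(n-2r+1),\tfrac12(2r-1)\right]\to-\tfrac12\bigl[(1-2\beta)\ln\tfrac{1-2\beta}{1-a\beta}+2\beta\ln\tfrac2a\bigr]$, using $2\beta/(a\beta)=2/a$. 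Adding the three nonzero contributions, expanding $H(\cdot)$ via its definition (\ref{shannon_def}), and splitting $\ln(2\beta)=\ln 2+\ln\beta$, all the terms in $\ln\beta$, $\ln 2$, and $\ln(1-2\beta)$ cancel, leaving exactly $\beta\ln a+\tfrac{1-2\beta}{2}\ln(1-a\beta)-(1-\beta)\ln(1-\beta)$, which is the right-hand side of (\ref{G_limit}). This last step is purely mechanical algebra once the three limits are in place.

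The only genuinely delicate point is the endpoint $a=2$, which is precisely the critical case $p/(p+q)\to x$ excluded by Lemma~\ref{reg_beta_lim}. I would handle it by a squeeze. Writing $L(a)$ for the right-hand side of (\ref{G_limit}), which is continuous in $a$ on $(0,1/\beta)$: for the upper bound, $I_{1-2\beta}[\cdot,\cdot]\le 1$ gives $\limsup_n\tfrac1n\ln G(2\beta)\le-\tfrac12 H(2\beta)+H(\beta)$, and one checks directly that this quantity equals $L(2)$; for the lower bound, $G(\lm_1)$ is nonincreasing in $\lm_1$ (because $I_x(a,b)$ is increasing in $x$), so for any $a'\in(2,1/\beta)$ we have $\tfrac1n\ln G(2\beta)\ge\tfrac1n\ln G(a'\beta)\to L(a')$, whence $\liminf_n\tfrac1n\ln G(2\beta)\ge\lim_{a'\downarrow2}L(a')=L(2)$. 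Together these give the claimed limit at $a=2$. I expect this endpoint argument, together with keeping straight which parameter ratio tends to $\beta$, to $2\beta$, and to $1-2\beta$, to be the only places requiring attention; all the analytic content is already contained in Lemmas~\ref{beta_limit} and~\ref{reg_beta_lim}.
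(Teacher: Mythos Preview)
Your proof is correct and follows essentially the same route as the paper: decompose $G(a\beta)$ into the $\sqrt{\pi}$ factor, the two beta-function factors, and the regularized incomplete beta factor, apply Lemmas~\ref{beta_limit} and~\ref{reg_beta_lim} respectively, and simplify. Your treatment is in fact more careful than the paper's at the endpoint $a=2$: the paper simply writes ``for $a$ sufficiently large'' when invoking Lemma~\ref{reg_beta_lim} and never returns to the boundary case, whereas your squeeze argument (using $I_{1-2\beta}[\cdot,\cdot]\le 1$ for the upper bound and monotonicity of $G$ in $\lm_1$ together with continuity of $L(a)$ for the lower bound) genuinely closes that gap.
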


\begin{proof}
Applying (\ref{beta}), we obtain, for $a$ sufficiently large,
\begin{align*}
&\lim_{n\rightarrow\infty}\frac{1}{n}\ln I_{1-a\beta}\left[\frac{1}{2}(n-2r+1),\frac{1}{2}(2r-1)\right]\\
&=-\left[\left(\frac{1-2\beta}{2}\right)\ln\left(\frac{1-2\beta}{1-a\beta}\right)+\beta\ln\left(\frac{2\beta}{a\beta}\right)\right],
\end{align*}
which may be combined with (\ref{beta_lim}) to give
$$\lim_{n\rightarrow\infty}\frac{1}{n}\ln G(a\beta)=-\frac{1}{2}H(2\beta)+H(\beta)+\frac{1}{2}H(2\beta)+\beta\ln(a\beta)+\left(\frac{1-2\beta}{2}\right)\ln(1-a\beta),$$
which simplifies to give (\ref{G_limit}).
\end{proof}

This threshold will be used to upper bound the square of the spectral norm of two random subspaces, and hence the worst-case coherence. Note that the analysis presented here does not apply to the case where $\beta=1/2$, since in this case one of the parameters in the regularized incomplete beta function term is fixed and does not tend to infinity. We proceed to the proof of the main result.

\begin{proof}[Proof of Theorem~\ref{random_result}]
Consider a matrix $A_i^{\ast}A_j$ for some pair of blocks in $A$. It follows from (\ref{comp_bound}) and (\ref{G_limit}) that, for any $\eta>0$,
\begin{equation}\label{suff_large}
\frac{1}{n}\ln\PP\left\{\|A_i^{\ast}A_j\|_2^2\geq a\beta\right\}\le \beta\ln a+\left(\frac{1-2\beta}{2}\right)\ln(1-a\beta)-(1-\beta)\ln(1-\beta)+\eta,
\end{equation}
for all $n$ sufficiently large. By the definition of $\hat{a}(\beta)$ in (\ref{threshold_def}), and since the left-hand side of (\ref{threshold_def}) is strictly decreasing in $a$ for $\beta\in(0,1/2)$, for any $\e>0$, setting $a:=\hat{a}(\beta)+\e$, and choosing $\eta$ sufficiently small in (\ref{suff_large}) ensures
$$\frac{1}{n}\ln\PP\left\{\|A_i^{\ast}A_j\|_2^2\geq\hat{a}(\beta)\cdot\beta+\e\right\}\le -C\;\;\;\;\mbox{for all $n$ sufficiently large},$$
where $C$ is some positive constant, from which it follows that
\begin{equation}\label{single_pair}
\PP\left\{\|A_i^{\ast}A_j\|_2^2\geq\hat{a}(\beta)\cdot\beta+\e\right\}\le e^{-Cn}\;\;\;\;\mbox{for all $n$ sufficiently large}.
\end{equation}
Now we use (\ref{single_pair}) to union bound over all pairs of blocks, obtaining
$$\begin{array}{rcl}
\PP\Big\{[\mu(A)]^2\geq\hat{a}(\beta)\cdot\beta+\e\Big\}&=&\PP\Big\{\displaystyle\bigcup_{i\neq j}\left(\|A_i^{\ast}A_j\|_2^2\geq\hat{a}(\beta)\cdot\beta+\e\right)\Big\}\\
&\le&\displaystyle\sum_{i\neq j}\PP\left\{\|A_i^{\ast}A_j\|_2^2\geq\hat{a}(\beta)\cdot\beta+\e\right\}\\
&\le&\displaystyle\binom{m}{2}e^{-Cn}\ra 0\;\;\;\;\mbox{as}\;\;n\ra\infty,
\end{array}$$
since $m$ is polynomial in $n$.
\end{proof}

\bibliography{group_coherence_bib}

\end{document}